\declaretheorem{theorem}
\declaretheorem[sibling=theorem]{lemma}
\declaretheorem[style=definition]{definition}
\declaretheorem[style=definition, numbered=no, name=Example (Search Ads)]{search-ads}
\definecolor{algbg}{gray}{.9}
\declaretheoremstyle[
    postheadspace=\newline,
    notefont=\bfseries,
    shaded={bgcolor=algbg}
]{algorithm}
\declaretheorem[style=algorithm, name=Algorithm]{algorithm}
\newcommand{\opt}{\textsc{Opt}}
\newcommand{\alg}{\textsc{Alg}}
\newcommand{\natlp}{\textsc{Nat}}
\newcommand{\jllp}{\textsc{JL}}
\newcommand{\emp}{\perp}
\newcommand{\unweighted}{0.711}
\newcommand{\vertexweighted}{0.7009}
\newcommand{\set}[1]{\left\{ #1\right\}}
\newcommand{\pdif}[2]{\frac{\partial{#1}}{\partial{#2}}}
\newcommand{\pdifsqr}[2]{\frac{\partial^2{#1}}{\partial{#2}^2}}
\newcommand*{\defeq}{\stackrel{\mathrm{def}}{=}}
\title{Online Stochastic Matching, Poisson Arrivals, and the Natural Linear Program}
\author{
Zhiyi Huang
\thanks{The University of Hong Kong. Email: \href{mailto:zhiyi@cs.hku.hk}{zhiyi@cs.hku.hk}.}
\and
Xinkai Shu
\thanks{The University of Hong Kong. Email: \href{mailto:xkshu@cs.hku.hk}{xkshu@cs.hku.hk}.}
}
\date{March 2021}
\begin{document}

\begin{titlepage}
\thispagestyle{empty}
\maketitle
\begin{abstract}
    \thispagestyle{empty}
    We study the online stochastic matching problem.
Consider a bipartite graph with offline vertices on one side, and with i.i.d.\ online vertices on the other side.
The offline vertices and the distribution of online vertices are known to the algorithm beforehand.
The realization of the online vertices, however, is revealed one at a time, upon which the algorithm immediately decides how to match it.
For maximizing the cardinality of the matching, we give a $\unweighted$-competitive online algorithm, which improves the best previous ratio of $0.706$.
When the offline vertices are weighted, we introduce a $\vertexweighted$-competitive online algorithm for maximizing the total weight of the matched offline vertices, which improves the best previous ratio of $0.662$.

Conceptually, we find that the analysis of online algorithms simplifies if the online vertices follow a Poisson process, and establish an approximate equivalence between this Poisson arrival model and online stochstic matching.
Technically, we propose a natural linear program for the Poisson arrival model, and demonstrate how to exploit its structure by introducing a converse of Jensen's inequality.
Moreover, we design an algorithmic amortization to replace the analytic one in previous work, and as a result get the first vertex-weighted online stochastic matching algorithm that improves the results in the weaker random arrival model.

\end{abstract}

\bigskip


    %
    %
    %
    %
    %
    %

\end{titlepage}

\section{Introduction}
\label{sec:introduction}

Building on three decades of research started by \citet*{KarpVV:STOC:1990}, online matching has developed to be a central topic in the literature of online algorithms.
Among other applications, online advertising has been a main driving force behind this development.

\begin{search-ads}
    Consider a search engine.
    Advertisers want their ads to be shown to the users who search for certain keywords.
    When a user performs a search, the search engine needs to immediately pick an advertiser interested in the search term and show its ad to the user.
\end{search-ads}

This problem is often modeled as an online bipartite matching problem.
The vertices on one side correspond to the advertisers, and are known upfront.
We call them the \emph{offline vertices}.
The vertices on the other side correspond to the searches by users, and are revealed one at a time.
We call them the \emph{online vertices}.
The edges represent if the advertisers are interested in the search terms.
If the advertisers pay the same amount, say, $1$ cent, per display of their ads to the relevant searches, it is an \emph{unweighted} matching problem whose goal is to maximize the cardinality of the matching.
If different advertisers pay different amounts per display, it is a \emph{vertex-weighted} matching problem in which we aim to maximize the total weight of the matched offline vertices. 

\paragraph{Worst Case Model.}
\citet*{KarpVV:STOC:1990} considered the worst case model, which measures an online algorithm's performance in the worst graph and worst arrival order of the online vertices.
Concretely, for any online algorithm, consider the ratio of the expected size of the algorithm's matching to the maximum matching in hindsight, in the worst graph and arrival order that minimize the ratio.
This is called the \emph{competitive ratio}.
In this model, \citet*{KarpVV:STOC:1990} introduced the Ranking algorithm that achieves the optimal $1-\frac{1}{e} \approx 0.632$ competitive ratio in the unweighted problem.
\citet*{AggarwalGKM:SODA:2011} generalized it to the vertex-weighted problem.

\paragraph{Random Order Model.}
Subsequently, researchers found the competitive ratios from the worst case model to be too pessimistic, and introduced stochasticity to obtain better results.
The weakest form of stochasticity is the \emph{random order model}, which still considers the worst graph for any give algorithm but assumes that the online vertices arrive in a random order.
\citet*{MahdianY:STOC:2011} proved that the competitive ratio of Ranking for unweighted matching improves to $0.696$ in this model, and \citet*{KarandeMT:STOC:2011} showed that it is at best $0.727$-competitive.
For the vertex-weighted case, \citet*{HuangTWZ:TALG:2019} proposed a generalization of Ranking that exploits the random arrival order, and its competitive ratio was improved to $0.662$ by \citet*{JinW:arXiv:2020}.

\paragraph{Online Stochastic Matching.}
This paper will focus on the \emph{online stochastic matching} model, which makes a stronger stochastic assumption that the online vertices are independently and identically distributed (i.i.d.) according to a distribution.
The distribution is known to the algorithm, but the realization of the online vertices is not.
An online algorithm's competitive ratio is defined against the worst distribution.
In the unweighted case of this model, \citet*{FeldmanMMM:FOCS:2009} first beat $1-\frac{1}{e}$ competitive ratio, under the assumption of integral arrival rate. Without this assumption, \citet*{ManshadiOS:MOR:2012} gave the first algorithm, and the state-of-the-art is the $0.706$-competitive algorithm by \citet*{JailletL:MOR:2014}.
In the vertex-weighted online stochastic matching, however, there has been no improvement over the random order model, unless we make extra assumptions.
See Subsection~\ref{sec:related-work} for further related work on special cases of online stochastic matching.

\subsection{Our Contributions}
\label{sec:contribution}

We introduce new online algorithms to obtain improved competitive ratios in both the unweighted and the vertex-weighted problems, from $0.706$ to $\unweighted$ and from $0.662$ to $\vertexweighted$ respectively.
Our vertex-weighted algorithm and the analysis are the first in the literature that successfully exploit the stronger stochasticity in online stochastic matching than in the random order model.

\begin{table}[h]
    \centering
    \caption{A summary of the results in this paper and in previous work}
    \label{tab:summary}
    \medskip
    \begin{tabular}{ccc}
        \toprule
        & Unweighted & Vertex-weighted \\
        \midrule
        Worst Case Model & $1-\frac{1}{e} \approx 0.632$~\cite{KarpVV:STOC:1990} & $1-\frac{1}{e} \approx 0.632$~\cite{AggarwalGKM:SODA:2011} \\
        Random Order Model & $0.696$~\cite{MahdianY:STOC:2011} & $0.662$~\cite{JinW:arXiv:2020} \\
        Online Stochastic Matching & $0.706$~\cite{JailletL:MOR:2014} & $0.662$~\cite{JinW:arXiv:2020} \\
        \textbf{Online Stochastic Matching (This Paper)} & $\mathbf{\unweighted}$ & $\mathbf{\vertexweighted}$ \\
        \bottomrule
    \end{tabular}
\end{table}

\paragraph{Conceptual Contribution: Poisson Arrivals.}
We find that the competitive analysis of online algorithms become easier in a variant of online stochastic matching in which the online vertices follow a Poisson process.
In other words, the number of online vertices in this model is drawn from a Poisson distribution instead of a fixed number as in the original model.
For example, the asymptotic independence among some events in the analysis of \citet*{JailletL:MOR:2014} becomes genuine independence in the Poisson arrival model.
Furthermore, we show that for a natural family of online algorithms, their competitive ratios in the Poisson arrival model also apply to the original online stochastic matching model.
See Section~\ref{sec:preliminary} for detail.

\paragraph{Technical Contribution 1: Natural Linear Program.}
Similar to the previous work on online stochastic matching, we compare the algorithm's matching to an upper bound of the optimal given by a linear program (LP).
To this end, we consider arguably the most natural LP that one could write for this problem.
In fact, the LPs used by the previous work are all relaxations of this natural LP.
See Appendix~\ref{app:lp} for a comparison.
Although the natural LP has exponentially many constraints, we give a polynomial-time separation oracle and thus demonstrate its computational tractability.
Moreover, from the LP's constraints we derive a converse of Jensen's inequality, which is repeatedly used throughout the paper.
Section~\ref{sec:lp} presents this natural LP and its properties.

\paragraph{Technical Contribution 2: Algorithmic Amortization.}
The previous online algorithms for unweighted online stochastic matching rely on an amortized analysis.
For each offline vertex, we can decompose its probability of being matched by the algorithm into two parts, which we shall refer to as the \emph{basic} and \emph{extra} parts.
Instead of comparing the contribution of an offline vertex to the LP and the probability that it gets matched, i.e., the sum of its \emph{basic} and \emph{extra} parts, the amortized analysis considers the sum of its \emph{basic} part and its contribution to the \emph{extra} parts of the other vertices.
It fails in the vertex weighted-case because the contribution to the other vertices' \emph{extra} parts could be negligible if their weights are much smaller.
We overcome this obstacle in Section~\ref{sec:vertex-weighted} by moving from the analytic amortization to an algorithmic one.
When an online vertex samples an offline vertex to matched to, we let it drop the sampled offline vertex with some probability and let it resample, even if the offline vertex is not yet matched.
The drop rates are carefully designed based on how much the offline vertices are matched in the natural LP.
See Section~\ref{sec:vertex-weighted} for detail.

\subsection{Other Related Work}
\label{sec:related-work}

Besides the aforementioned results~\cite{ManshadiOS:MOR:2012, JailletL:MOR:2014}, online stochastic matching has also been studied in the special case of integral arrival rates, i.e., when the expected number of online vertices of each type is an integer.
In fact, when \citet*{FeldmanMMM:FOCS:2009} first introduced online stochastic matching, they focused on this case and gave a $0.67$-competitive algorithm.
Their algorithm is non-adaptive:
its matching decisions are independent of what happened in previous rounds.
Later, the competitive ratio was improved in a series of works.
\citet*{BahmaniK:ESA:2010} modified the algorithm of \citet{FeldmanMMM:FOCS:2009} to make it $0.699$-competitive.
\citet*{ManshadiOS:MOR:2012} proposed a $0.705$-competitive adaptive algorithm.
They also showed that no algorithm is better than $1-\frac{1}{e^2} \approx 0.862$-competitive even in the special case of unweighted matching with integral arrival rates, and for general arrival rates no algorithm is better than $0.823$-competitive.
Further, \citet*{JailletL:MOR:2014} designed LP-based algorithms that are $0.725$-competitive and $0.729$-competitive in the vertex-weighted and unweighted problems respectively.
Although the constraints in their LP are looser than those in our natural LP, their constraints exploit the integral arrival rates to ensure a semi-integral optimal solution.
As a result, it is easier to convert their LP solution into an online algorithm.
\citet*{Brubach:Algorithmica:2020} proposed a $0.7299$-algorithm by considering a different LP, which is between the LP of \citet{JailletL:MOR:2014} and ours in terms of the tightness of constraints.
See Appendix~\ref{app:lp} for a comparison of the LPs.

In the more general edge-weighted problem, and still under the assumption of integral arrival rates, \citet*{HaeuplerMZ:WINE:2011} proposed a $0.667$-competitive algorithm, and \citet*{Brubach:Algorithmica:2020} gave an improved $0.705$-competitive algorithm.

The broader online matching literature is too vast to be covered extensively.
Besides the mentioned results in the unweighted case~\cite{KarpVV:STOC:1990} and vertex-weighted case~\cite{AggarwalGKM:SODA:2011}, the edge-weighted case was studied by \citet*{FeldmanKMMP:WINE:2009} and \citet*{FahrbachHTZ:FOCS:2020}.
The algorithms and analysis have been unified under the online primal dual framework~\cite{DevanurJK:SODA:2013, DevanurHKMY:TEAC:2016}.
Other online matching problems from online advertising include AdWords~\cite{MehtaSVV:JACM:2007, BuchbinderJN:ESA:2007, GoelM:SODA:2008, DevanurH:EC:2009, HuangZZ:FOCS:2020} and online matching with stochastic rewards~\cite{MehtaP:FOCS:2012, MehtaWZ:SODA:2014, HuangZ:STOC:2020, GoyalU:EC:2020}.
See the survey by \citet{Mehta:FTTCS:2013} for further references.

\section{Online Stochastic Matching and Poisson Arrivals}
\label{sec:preliminary}

Consider the matching in a bipartite graph.
The offline vertices on one side are fixed.
The online vertices on the other side are i.i.d.
Let $I$ be the set of online vertex types.
Let $J$ be the set of offline vertices.
For any online type $i \in I$ and any offline vertex $j \in J$, let $w_{ij} \ge 0$ be the weight of matching an online vertex of type $i$ to the offline vertex $j$.
The problem is \emph{unweighted} if $w_{ij} \in \{ 0 , 1\}$, and is \emph{vertex-weighted} if $w_{ij} \in \{ 0, w_i \}$, for any $i \in I$ and any $j \in J$.
Each online type $i \in I$ further has arrival rate $\lambda_i$, which equals the expected number of online vertices of type $i$ in the graph.

\paragraph{Online Stochastic Matching.}
Online stochastic matching considers a random bipartite graph $G$ with $\Lambda = \sum_{i \in I} \lambda_i$ online vertices\footnote{In online stochastic matching setting the sum $\Lambda$ is an integer, while in the Poisson arrival model it could be any positive real number.} arriving one at a time on one side, and with offline vertices $J$ on the other side.
Each online vertex independently draws its type $i \in I$ with probability $\frac{\lambda_i}{\Lambda}$.
The set of online types and the corresponding weights $w_{ij}$'s and arrival rates $\lambda_i$'s are known to the algorithm, but the realization of the graph is not.

\paragraph{Poisson Arrival Model.}
The competitive analyses of online algorithms substantially simplify in a variant of the online stochastic matching model.
Instead of having a fixed number of online vertices, let each type independently follow a Poisson process with arrival rate $\lambda_i$.
Equivalently, draw the number of online vertices from a Poisson distribution with mean $\Lambda$.



\paragraph{Online Algorithms.}
An online algorithm makes the matching decision for each online vertex irrevocably and immediately upon its arrival.
Let $\alg$ be the expected total weight of the edges in the algorithm's matching.
We shall consider the standard competitive analysis with respect to the expected total weight of the maximum weight matching of the realized graph $G$, denoted as $\opt$.
The \emph{competitive ratio} of an online algorithm is the infimum of $\frac{\alg}{\opt}$ over all possible instances.

\begin{theorem}
    \label{thm:opt-compare}
    Fix any distribution of online vertices and any $\Lambda$:
    \begin{enumerate}
        \item The optimal of online stochastic matching is at least the optimal of the Poisson arrival model.
        \item The optimal of the Poisson arrival model is at least $1 - O(\Lambda^{-\frac{1}{2}})$ times the optimal of online stochastic matching.
    \end{enumerate}
\end{theorem}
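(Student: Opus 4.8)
\medskip
\noindent\textbf{Proof proposal.}
The plan is to route both parts through a single structural fact about the function
\[
  f(n) \defeq \E\big[\,\text{weight of a maximum weight matching of the realized graph with } n \text{ i.i.d.\ online vertices}\,\big],\qquad n\in\mathbb{Z}_{\ge 0}.
\]
By definition the optimal of online stochastic matching is $f(\Lambda)$, where $\Lambda$ is the fixed number of arrivals (necessarily a positive integer for that model to be defined). For the Poisson arrival model, Poisson splitting says that conditioned on the total number of arrivals being $n$ the $n$ online vertices are i.i.d.\ with $\Pr[\text{type }i]=\lambda_i/\Lambda$, and the total is $\mathrm{Poisson}(\Lambda)$; hence the optimal of the Poisson model equals $\E_{N\sim\mathrm{Poisson}(\Lambda)}[f(N)]$. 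So it suffices to show $\E_{N\sim\mathrm{Poisson}(\Lambda)}[f(N)]\le f(\Lambda)$ (Part~1) and $\E_{N\sim\mathrm{Poisson}(\Lambda)}[f(N)]\ge\big(1-O(\Lambda^{-1/2})\big)f(\Lambda)$ (Part~2). The structural fact I would prove first is that $f$ is a non-decreasing concave sequence with $f(0)=0$. Monotonicity and $f(0)=0$ are immediate; concavity, i.e.\ $2f(n)\ge f(n+1)+f(n-1)$, follows from submodularity of the max(-weight) matching value $\nu(\cdot)$ viewed as a function of the set of available online vertices: for i.i.d.\ vertices $v_1,\dots,v_{n+1}$, submodularity applied to $\{v_1,\dots,v_{n-1},v_n\}$ and $\{v_1,\dots,v_{n-1},v_{n+1}\}$ gives pointwise $\nu(v_{\le n})+\nu(\{v_1,\dots,v_{n-1},v_{n+1}\})\ge\nu(v_{\le n+1})+\nu(v_{\le n-1})$, and taking expectations with exchangeability turns this into $2f(n)\ge f(n+1)+f(n-1)$. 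The submodularity of $\nu$ is standard (in the unweighted, resp.\ vertex-weighted, case it is the rank function, resp.\ weighted rank function, of the transversal matroid on the online vertices), and can also be shown directly by an alternating-path exchange argument on $M\triangle M_0$ for optimal matchings $M,M_0$ of the larger and smaller instances.

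Given the structural fact, Part~1 is Jensen's inequality: extend $f$ to its piecewise-linear interpolation $\bar f$ on $\mathbb{R}_{\ge0}$, which is concave, so $\E[f(N)]=\E[\bar f(N)]\le\bar f(\E[N])=\bar f(\Lambda)=f(\Lambda)$.

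For Part~2 I would use concavity through the origin: $f(0)=0$ and concavity make $n\mapsto f(n)/n$ non-increasing on $n\ge1$, so $f(n)\ge\frac{n}{\Lambda}f(\Lambda)$ for $1\le n\le\Lambda$, while monotonicity gives $f(n)\ge f(\Lambda)$ for $n\ge\Lambda$; in both cases $f(n)\ge\frac{\min(n,\Lambda)}{\Lambda}f(\Lambda)$. Taking expectations,
\[
  \E[f(N)]\ \ge\ \frac{\E[\min(N,\Lambda)]}{\Lambda}\,f(\Lambda)\ =\ \Big(1-\frac{\E[(N-\Lambda)^+]}{\Lambda}\Big)f(\Lambda).
\]
Since $\E[N]=\Lambda$ we have $\E[(N-\Lambda)^+]=\tfrac12\E\,|N-\Lambda|\le\tfrac12\sqrt{\operatorname{Var}(N)}=\tfrac12\sqrt{\Lambda}$ by Cauchy--Schwarz, giving $\E[f(N)]\ge\big(1-\tfrac1{2\sqrt\Lambda}\big)f(\Lambda)$, as desired.

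The main obstacle is the structural lemma: establishing cleanly that $\nu$ is submodular over subsets of the online side, and in the vertex-weighted case obtaining the weighted version. The remaining care is minor --- discrete-versus-continuous Jensen, and the degenerate regime where $f(\Lambda)$ is small (which is precisely why Part~2 is stated multiplicatively and handled via the monotonicity of $f(n)/n$ rather than by an additive $O(\sqrt\Lambda)$ slack).
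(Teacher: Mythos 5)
Your proposal is correct and follows essentially the same route as the paper: both arguments reduce to the fact that the marginal increments of $f(n)$ are nonincreasing (the paper's $\opt_n$), deduce Part~1 by Jensen/averaging, and prove Part~2 via $f(n)\ge\frac{\min(n,\Lambda)}{\Lambda}f(\Lambda)$ together with the Poisson bound $\E[(N-\Lambda)^+]=O(\sqrt\Lambda)$. The only differences are in the justification of sub-steps — you derive concavity from submodularity of the (weighted) transversal-matroid rank function where the paper gives a one-line ``drop a random online vertex'' argument, and you bound the Poisson tail by Cauchy--Schwarz where the paper computes $\E[(N-\Lambda)^+]=\Lambda^{\Lambda}e^{-\Lambda}/(\Lambda-1)!$ exactly and applies Stirling.
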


\begin{proof}
    For the fixed distribution, let $\opt_n$ be the difference between the optimal of online stochastic matching with $n$ and $n-1$ online vertices.
    Since dropping a random online vertex from the optimal solution with $n$ vertices gives a solution to the case of $n-1$ vertices, $\opt_n$ is nonincreasing in $n$. 

    By definition, the optimal of online stochastic matching with the given $\Lambda$ equals $\sum_{n=1}^\Lambda \opt_n$.
    Similarly, the optimal of the Poisson arrival model equals $\sum_{m=1}^\infty \frac{\Lambda^m e^{-\Lambda}}{m!} \sum_{n=1}^m \opt_n$.
    Since both have $\Lambda$ vertices in expectation, the first part of the theorem follows from the monotonicity of $\opt_n$.

    Next we prove the second part.
    The optimal of the Poisson arrival model is lower bounded by:
    \begin{align}
        \sum_{m=1}^\infty \frac{\Lambda^m e^{-\Lambda}}{m!} \sum_{n=1}^{\min \{ \Lambda, m\}} \opt_n
        &
        \ge \sum_{m=1}^\infty \frac{\Lambda^m e^{-\Lambda}}{m!} \frac{\min \{ \Lambda, m\}}{\Lambda} \sum_{n=1}^{\Lambda} \opt_n
        \tag{monotone $\opt_n$}
        \\
        &
        = \Big( 1 - O \big( \Lambda^{-\frac{1}{2}} \big) \Big) \sum_{n=1}^{\Lambda} \opt_n
        \tag{tail bound of Poisson}
        ~.
    \end{align}

    We include a proof of the tail bound in Appendix~\ref{app:proof-poisson-tail} for completeness.
\end{proof}

\paragraph{Monotone Online Algorithms.}
For any $n \ge 1$, let $\alg_n$ denote the expected weight that the algorithm gets from matching the $n$-th online vertex.
An online algorithm is \emph{monotone} if $\alg_n$ is nonincreasing in $n$;
it is $\alpha$-\emph{approximately monotone} if $\alg_n \le \alpha \cdot \alg_\ell$ for any $n > \ell$.
Intuitively, natural online algorithms shall be monotone since there are fewer remaining offline vertices as $n$ increases.
Indeed, our unweighted algorithm is monotone, and our vertex-weighted algorithm is $O(1)$-approximately monotone.
To our knowledge, so are the existing algorithms in the literature.

\begin{theorem}
    \label{thm:alg-compare}
    Fix any distribution of online vertices and any $\Lambda$:
    \begin{enumerate}
        \item For any monotone algorithm, its objective in online stochastic matching is at least its objective in the Poisson arrival model.
        \item For any $\alpha$-approximately monotone algorithm, its objective in online stochastic matching is at least $1 - O(\alpha \Lambda^{-\frac{1}{2}})$ times its objective in the Poisson arrival model.
        %
    \end{enumerate}
\end{theorem}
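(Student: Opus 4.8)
The plan is to follow the proof of Theorem~\ref{thm:opt-compare}, with $\alg_n$ playing the role of $\opt_n$. The first step is to express both objectives in terms of the quantities $\alg_n$. Since the algorithm is online, the expected weight it collects from the $n$-th arriving online vertex is determined by its behavior on the first $n$ arrivals only; and in the Poisson arrival model the sequence of online vertex \emph{types} is i.i.d.\ (by the coloring property of Poisson processes, each arrival is independently of type $i$ with probability $\lambda_i/\Lambda$) and is independent of the realized number of arrivals $N\sim\mathrm{Poisson}(\Lambda)$. Hence $\alg_n$ is one and the same quantity in both models, the online-stochastic-matching objective equals $\sum_{n=1}^{\Lambda}\alg_n$, and the Poisson objective equals $\E_N\big[\sum_{n=1}^N\alg_n\big]=\sum_{n\ge 1}\alg_n\Pr[N\ge n]$ (this is the series $\sum_{m\ge 1}\frac{\Lambda^m e^{-\Lambda}}{m!}\sum_{n=1}^m\alg_n$ appearing in the proof of Theorem~\ref{thm:opt-compare}).

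For Part~1 I would observe that, for a monotone algorithm, the increments of $m\mapsto\sum_{n=1}^m\alg_n$ are the nonnegative, nonincreasing numbers $\alg_n$, so the piecewise-linear extension of this function to $[0,\infty)$ is concave. As $\E[N]=\Lambda$ is an integer, Jensen's inequality then gives $\E_N\big[\sum_{n=1}^N\alg_n\big]\le\sum_{n=1}^{\Lambda}\alg_n$, which is precisely Part~1. (Equivalently, the coefficients $\mathbf 1[n\le\Lambda]-\Pr[N\ge n]$ are nonnegative for $n\le\Lambda$, nonpositive for $n>\Lambda$, and sum to $\Lambda-\E[N]=0$, so pairing them against the nonincreasing sequence $\alg_n$ by Abel summation yields the same conclusion.)

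For Part~2 I would bound the Poisson objective from above by the online-stochastic-matching objective $S\defeq\sum_{n=1}^{\Lambda}\alg_n$ up to a factor $1+O(\alpha\Lambda^{-\frac12})$. When $N\le\Lambda$, nonnegativity of the $\alg_n$ gives $\sum_{n=1}^N\alg_n\le S$. When $N>\Lambda$, approximate monotonicity gives $\alg_n\le\alpha\,\alg_\ell$ for every $\ell\le\Lambda<n$, and averaging over $\ell\in\set{1,\dots,\Lambda}$ yields $\alg_n\le\frac{\alpha}{\Lambda}S$ for each $n>\Lambda$, so $\sum_{n=1}^N\alg_n\le S+(N-\Lambda)\frac{\alpha}{\Lambda}S$. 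Taking expectations,
\[
  \sum_{n\ge 1}\alg_n\Pr[N\ge n] \;=\; \E_N\Big[\,\sum\nolimits_{n=1}^{N}\alg_n\Big] \;\le\; S\cdot\Big(1+\tfrac{\alpha}{\Lambda}\,\E\big[(N-\Lambda)^{+}\big]\Big).
\]
Finally, the Poisson tail bound of Appendix~\ref{app:proof-poisson-tail} --- or simply $\E[(N-\Lambda)^{+}]\le\sqrt{\E[(N-\Lambda)^2]}=\sqrt\Lambda$ by Cauchy--Schwarz --- shows $\E[(N-\Lambda)^{+}]=O(\sqrt\Lambda)$, so the Poisson objective is at most $\big(1+O(\alpha\Lambda^{-\frac12})\big)S$, i.e.\ $S$ is at least $\big(1-O(\alpha\Lambda^{-\frac12})\big)$ times the Poisson objective.

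Most of this is bookkeeping, so the one step that needs care is the first: justifying that $\alg_n$ is a single well-defined number attached to both models, which hinges on the algorithm being online together with the independence/coloring structure of the Poisson process. The only genuinely new point beyond Theorem~\ref{thm:opt-compare} is the $N>\Lambda$ tail in Part~2: unlike $\opt_n$, the sequence $\alg_n$ need not be monotone (nor nonincreasing past $\Lambda$), so those terms cannot simply be dropped; instead each of the $N-\Lambda$ ``extra'' vertices must be charged to the average of $\alg_1,\dots,\alg_\Lambda$ via approximate monotonicity, which is exactly why the loss carries the factor $\alpha$.
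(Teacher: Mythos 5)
Your proposal is correct and follows essentially the same route as the paper: express both objectives via $\sum_{n=1}^{m}\alg_n$, use monotonicity (the paper's one-line appeal, which your Jensen/Abel-summation argument makes explicit) for Part~1, and charge each of the $N-\Lambda$ extra arrivals to the average of $\alg_1,\dots,\alg_\Lambda$ via $\alpha$-approximate monotonicity before invoking the Poisson tail bound $\E[(N-\Lambda)^{+}]=O(\sqrt{\Lambda})$ for Part~2. Your justification that $\alg_n$ is the same quantity in both models (via the coloring property) is a point the paper leaves implicit, and your Cauchy--Schwarz alternative to the appendix tail bound is a harmless simplification.
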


\begin{proof}
    \textbf{(Part 1: Monotone Algorithms)~}
    By definition, the objective when there are $m$ online vertices equals $\sum_{n=1}^m \alg_n$.
    Hence, its objective in online stochastic matching is $\sum_{n=1}^\Lambda \alg_n$, and its objective in the Poisson arrival model is $\sum_{m=1}^\infty \frac{\Lambda^m e^{-\Lambda}}{m!} \sum_{n=1}^m \alg_n$.
    Since both have $\Lambda$ vertices in expectation, the first part of the theorem follows from the monotonicity of $\alg_n$.%
    \\[1ex]
    \textbf{(Part 2: Approximately Monotone Algorithms)~}
    The difference between the algorithm's objectives in the Poisson arrival model and in online stochastic matching is:
    \[
        \sum_{m=1}^\infty \frac{\Lambda^m e^{-\Lambda}}{m!} \sum_{n=1}^m \alg_n - \sum_{n=1}^\Lambda \alg_m = \sum_{n=1}^\infty \alg_n \sum_{m=n}^\infty \frac{\Lambda^m e^{-\Lambda}}{m!} - \sum_{n=1}^\Lambda \alg_n
        ~.
    \]

    Since $\sum_{m=n}^\infty \frac{\Lambda^m e^{-\Lambda}}{m!} < \sum_{m=0}^\infty \frac{\Lambda^m e^{-\Lambda}}{m!} = 1$, we can drop all $\alg_n$ for $1 \le n \le \Lambda$ and bound it by:
    \begin{align*}
        \sum_{n=\Lambda+1}^\infty \alg_n \sum_{m=n}^\infty \frac{\Lambda^m e^{-\Lambda}}{m!}
        &
        \le \sum_{n=\Lambda+1}^\infty \Big( \frac{\alpha}{\Lambda} \sum_{\ell=1}^\Lambda \alg_\ell \Big) \sum_{m=n}^\infty \frac{\Lambda^m e^{-\Lambda}}{m!}
        \tag{$\alpha$-approx.\ monotone $\alg_n$}
        \\
        &
        = \frac{\alpha}{\Lambda} \sum_{m=\Lambda+1}^\infty \frac{\Lambda^m e^{-\Lambda}}{m!} (m-\Lambda) \sum_{\ell=1}^\Lambda \alg_\ell \\
        &
        = O \big( \alpha \Lambda^{-\frac{1}{2}} \big) \sum_{\ell=1}^\Lambda \alg_\ell
        \tag{tail bound of Poisson}
        ~.
    \end{align*}

    Rearranging terms proves the theorem.
    See Appendix~\ref{app:proof-poisson-tail} for a proof of the tail bound.
\end{proof}

Hence, we will analyze our monotone or $O(1)$-approximately monotone algorithms in the Poisson arrival model.
By Theorem~\ref{thm:opt-compare} and Theorem~\ref{thm:alg-compare}, the competitive ratios then hold in both models, up to a $1 - O(\Lambda^{-\frac{1}{2}})$ factor which is negligible for sufficiently large instances.
We remark that previous works by \citet{ManshadiOS:MOR:2012} and \citet{JailletL:MOR:2014} also assumed sufficiently large instances.

\section{Natural Linear Program}
\label{sec:lp}

We consider the following LP relaxation, and let $\natlp$ denote its optimal value.
\begin{equation}
    \label{eqn:natural-lp}
    \tag{\natlp}
    \begin{aligned}
        \text{maximize} \quad & \sum_{i \in I} \sum_{j \in J} w_{ij} x_{ij} \\
        \text{subject to} \quad &
        \sum_{j \in J} x_{ij} \le \lambda_i && \forall i \in I \\
        &
        \sum_{i \in S} x_{ij} \le 1 - \exp \Big( - \sum_{i \in S} \lambda_i \Big) && \forall j \in J, \forall S \subseteq I \\
        & x_{ij} \ge 0 && \forall i \in I, \forall j \in J
    \end{aligned}
\end{equation}

It is \textit{natural} in the sense that the second constraint holds naturally in the Poisson arrival model.
Although it does not hold in the general case of online stochastic matching, it is asymptotically true when there are sufficiently many online vertices, which is focal case in existing works such as \citet{ManshadiOS:MOR:2012} and \citet{JailletL:MOR:2014}.
In deed, both works used the constraint $x_{ij} \le 1 - \frac{1}{e}$ in their LPs which is a special case of the second constraint in our natural LP. See Appendix~\ref{app:lp} for further discussions.

\begin{theorem}
    \label{lem:natural-lp}
    In the Poisson arrival model, $\opt \le \natlp$.
\end{theorem}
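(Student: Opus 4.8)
The plan is to construct a feasible solution to \natlp{} whose objective equals $\opt$; feasibility then immediately gives $\opt \le \natlp$. First I would fix, for each realization $G$ of the Poisson arrivals, a maximum-weight matching $M(G)$, breaking ties by an arbitrary fixed rule so that $M$ is a measurable function of $G$. Then I would set
\[
    x_{ij} \defeq \Pr\big[\, j \text{ is matched in } M(G) \text{ to an online vertex of type } i \,\big]
\]
for every $i \in I$ and $j \in J$, the probability being over the Poisson arrivals. Nonnegativity is immediate, and since each offline vertex is matched to at most one online vertex, the weight of $M(G)$ equals $\sum_{i,j} w_{ij}\,\mathbf{1}[j \text{ matched to type } i]$; taking expectations shows the objective value of $x$ is $\sum_{i,j} w_{ij} x_{ij} = \opt$.

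Next I would verify the two families of inequality constraints. For the first, $\sum_{j} x_{ij}$ is the expected number of offline vertices that $M(G)$ assigns to type-$i$ online vertices, which is at most the expected number of type-$i$ arrivals, namely $\lambda_i$. For the second, I would fix $j \in J$ and $S \subseteq I$ and note that $\sum_{i \in S} x_{ij}$ is the probability that $j$ is matched to an online vertex whose type lies in $S$. Such a matching requires at least one online arrival of some type in $S$; by the superposition property of independent Poisson processes, the number of arrivals with type in $S$ is Poisson with mean $\sum_{i \in S} \lambda_i$, so this number is zero with probability $\exp(-\sum_{i \in S}\lambda_i)$. Hence $\sum_{i \in S} x_{ij} \le 1 - \exp(-\sum_{i \in S}\lambda_i)$, exactly the required bound, and $x$ is feasible.

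I expect the only point needing care to be the superposition step — confirming that merging the Poisson arrival processes of the types in $S$ yields a Poisson process of rate $\sum_{i \in S}\lambda_i$, so that the ``no arrival of type in $S$'' event has probability precisely $\exp(-\sum_{i\in S}\lambda_i)$. This is the one place where the Poisson arrival model is genuinely used, and it is the reason the analogous constraint only holds asymptotically in the original fixed-$\Lambda$ model. Everything else reduces to routine linearity-of-expectation bookkeeping.
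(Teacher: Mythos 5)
Your proposal matches the paper's proof: both define $x_{ij}$ as the probability that $j$ is matched to a type-$i$ vertex in the optimal matching, observe the objective equals $\opt$, and verify the two constraint families by comparing to the expected number of type-$i$ arrivals and to the probability that some type in $S$ arrives. Your extra care about measurable tie-breaking and Poisson superposition is fine but not needed beyond what the paper states.
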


\begin{proof}
    We will construct a feasible solution to the natural LP whose objective equals the expected size of the optimal matching of the realized graph.
    Let $x_{ij}$ be the probability that offline vertex $j$ is matched to an online vertex of type $i$.
    Then, the objective of the natural LP equals the expected total weight of the matching.

    It remains to show feasibility.
    For any online type $i$, $\sum_{j \in J} x_{ij}$ is the expected number of matched online vertices of type $i$, which is no more than the expected number of online vertices of type $i$, i.e., $\lambda_i$.
    For any offline vertex $j$, and any subset of online types in $j$'s neighborhood $S \subseteq I$, $\sum_{i \in S} x_{ij}$ is the probability that $j$ is matched to an online vertex whose type is in $S$, which is no more than the probability that there is an online vertex whose type is in $S$, i.e., $1 - \exp \big( - \sum_{i \in S} \lambda_i \big)$.
    Finally for any $i \in I$ and any $j \in J$, $x_{ij}$ is nonnegative by definition.
\end{proof}

\subsection{Computational Tractability}

The natural LP has an exponential number of constraints.
Nonetheless, this subsection shows how to solve it in polynomial time using a separation oracle and the ellipsoid method.
We first introduce an equivalent form of the second constraint.

\begin{lemma}
    \label{lem:natural-lp-structure}
    The second constraint of the natural LP is equivalent to the following condition.
    For any offline vertex $j$, and any non-negative weights $0 \le \mu_i \le \lambda_i$ for $i \in I$:
    \[
        \sum_{i \in I} \frac{\mu_i x_{ij}}{\lambda_i} \le 1 - \exp \Big(- \sum_{i \in I} \mu_i \Big)
        ~.
    \]
\end{lemma}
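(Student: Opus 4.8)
We want to show the equivalence of two families of linear constraints on a fixed row $(x_{ij})_{i \in I}$ of the LP variables (fixing $j$). Call the original constraints the "subset form" — one inequality $\sum_{i \in S} x_{ij} \le 1 - \exp(-\sum_{i \in S}\lambda_i)$ for each $S \subseteq I$ — and the new one the "weighted form" — one inequality $\sum_{i} \mu_i x_{ij}/\lambda_i \le 1 - \exp(-\sum_i \mu_i)$ for each $\mu$ with $0 \le \mu_i \le \lambda_i$. The natural strategy is to prove the two implications separately, and the real content is that the weighted form, despite quantifying over a continuum of $\mu$, is no stronger than the subset form.

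**The easy direction.**

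The subset form is a special case of the weighted form: given $S$, take $\mu_i = \lambda_i$ for $i \in S$ and $\mu_i = 0$ for $i \notin S$. Then $\mu_i x_{ij}/\lambda_i = x_{ij}$ for $i \in S$ and $0$ otherwise, and $\sum_i \mu_i = \sum_{i \in S}\lambda_i$, so the weighted inequality for this $\mu$ is exactly the subset inequality for $S$. Hence the weighted form implies the subset form immediately.

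**The hard direction — concavity plus an extreme-point argument.**

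Now assume the subset form holds for all $S$; fix a valid weight vector $\mu$. The plan is to view the left-hand side $L(\mu) = \sum_i \mu_i x_{ij}/\lambda_i$ and the right-hand side $R(\mu) = 1 - \exp(-\sum_i \mu_i)$ as functions of $\mu$ on the box $\prod_i [0,\lambda_i]$. Since $x_{ij} \ge 0$, $L$ is linear in $\mu$; and $R$, being $1 - e^{-t}$ composed with the linear map $\mu \mapsto \sum_i \mu_i$, is concave in $\mu$. Therefore $R - L$ is concave on the box, so it attains its minimum at a vertex of the box — i.e., at a $\mu$ of the form $\mu_i \in \{0,\lambda_i\}$, which corresponds to some subset $S$. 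At such a vertex the weighted inequality is exactly the subset inequality for $S$, which holds by assumption; hence $R(\mu) - L(\mu) \ge 0$ for all $\mu$ in the box. (One should phrase this as: $R - L \ge 0$ on the finite vertex set $\Rightarrow R - L \ge 0$ on the whole box, by concavity — each point of the box is a convex combination of vertices, and a concave function is at least the corresponding convex combination of its vertex values, each of which is nonnegative.)

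**What to watch out for.**

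The only subtlety is making the concavity/extreme-point step airtight: $R - L$ is concave, so its minimum over the compact convex box is attained at an extreme point, and the extreme points of $\prod_i[0,\lambda_i]$ are exactly the $0/\lambda_i$ vectors — this is standard, but it is the crux, and it is worth stating cleanly rather than hand-waving. A mild point: if $I$ is infinite one would want to restrict attention to the (finitely many) types in $j$'s neighborhood, or otherwise argue the minimization still behaves well; in the finite case there is nothing to worry about. I expect the write-up to be short: one line for the easy direction, and for the hard direction one line setting up $R - L$, one line for its concavity, and one line invoking the vertex principle and the hypothesis.
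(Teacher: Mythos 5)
Your proof is correct and matches the paper's argument essentially verbatim: one direction is the specialization $\mu_i \in \{0,\lambda_i\}$, and the other follows because $\sum_i \mu_i x_{ij}/\lambda_i + \exp(-\sum_i \mu_i)$ is convex in $\mu$ (equivalently, your $R-L$ is concave), so its extremum over the box is attained at a vertex. Your write-up just spells out the extreme-point step slightly more explicitly than the paper does.
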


\begin{proof}
    On the one hand, the second constraint of the natural LP is the special case of the condition in the lemma when $\mu_i \in \{ 0, \lambda_i \}$ for all $i \in I$.

    On the other hand, $\sum_{i \in I} \frac{\mu_i x_{ij}}{\lambda_i} + \exp \big(- \sum_{i \in I} \mu_i \big)$ is convex in $\mu_i$ for each $i \in I$.
    Hence, its maximum is achieved at a vertex of the feasible hyperrectangle, i.e., $\mu_i \in \{0, \lambda_i \}$ for all $i \in I$.
    In other words, the special case is sufficient for ensuring the general case.
\end{proof}

\begin{theorem}
    \label{thm:natural-lp-compuation}
    The natural LP is solvable in polynomial time.
\end{theorem}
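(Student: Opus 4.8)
The plan is to run the ellipsoid method on \eqref{eqn:natural-lp}, so it suffices to exhibit a polynomial-time separation oracle. The constraints $\sum_{j} x_{ij} \le \lambda_i$ and $x_{ij} \ge 0$ number only $O(|I|\,|J|)$ and are checked directly; in particular any type $i$ with $\lambda_i = 0$ is forced to have $x_{ij} = 0$ for all $j$, so from now on assume $\lambda_i > 0$. The real work is to separate the exponential family, and this decomposes over offline vertices: for each fixed $j \in J$ we must decide whether there is $S \subseteq I$ with
\[
    \psi_j(S) \defeq \sum_{i \in S} x_{ij} + \exp\Bigl(-\sum_{i \in S} \lambda_i\Bigr) > 1 ,
\]
and if so return such an $S$.

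I would reduce this to checking only $|I| + 1$ candidate sets, using Lemma~\ref{lem:natural-lp-structure}. By that lemma, the second-constraint instances at vertex $j$ all hold iff $\max_{0 \le \mu_i \le \lambda_i} F_j(\mu) \le 1$, where $F_j(\mu) \defeq \sum_{i \in I} \tfrac{\mu_i x_{ij}}{\lambda_i} + \exp\bigl(-\sum_{i\in I}\mu_i\bigr)$. Grouping the feasible $\mu$ by their total mass $T = \sum_i \mu_i$ gives $\max_\mu F_j(\mu) = \max_{0 \le T \le \Lambda} \bigl(\Phi_j(T) + e^{-T}\bigr)$, where $\Phi_j(T)$ is the optimum of the fractional-knapsack problem $\max\bigl\{\sum_i \tfrac{\mu_i x_{ij}}{\lambda_i} : \sum_i \mu_i = T,\ 0 \le \mu_i \le \lambda_i\bigr\}$. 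Sorting the types in nonincreasing order of the ratio $x_{ij}/\lambda_i$, the greedy solution shows $\Phi_j$ is piecewise linear and concave with breakpoints $T_0 = 0 < T_1 < \dots < T_{|I|} = \Lambda$, where $T_k$ is the sum of $\lambda_i$ over the set $S_k$ of the top $k$ types, and $\Phi_j(T_k) = \sum_{i \in S_k} x_{ij}$. On each linear piece of $\Phi_j$, the function $\Phi_j(T) + e^{-T}$ is a sum of an affine and a strictly convex function, hence convex, so its maximum over that piece is attained at an endpoint; therefore $\max_\mu F_j(\mu) = \max_{0 \le k \le |I|} \psi_j(S_k)$.

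Consequently the separation oracle for vertex $j$ is: compute the ratios $x_{ij}/\lambda_i$, sort the types, form the nested prefixes $S_0 \subseteq S_1 \subseteq \dots \subseteq S_{|I|}$, and evaluate $\psi_j(S_k)$ for each $k$; if all of them are $\le 1$ then all constraints at $j$ are satisfied, and otherwise any $S_k$ with $\psi_j(S_k) > 1$ is a violated constraint to report. This runs in $O(|I|\log|I|)$ time; performing it for every $j \in J$, together with the direct checks of the polynomially many remaining constraints, yields a polynomial-time separation oracle, and the ellipsoid method then solves \eqref{eqn:natural-lp} in polynomial time.

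The crux is the reduction in the second paragraph --- showing that, for each offline vertex, the ``worst'' subset $S$ may be taken to be a prefix in the ratio order, equivalently that separation reduces to $|I|+1$ candidates --- for which Lemma~\ref{lem:natural-lp-structure} and the fractional-knapsack/convexity observation are exactly the right tools; everything else is routine. One technicality to dispatch in passing is that the right-hand side $1 - \exp(-\sum_{i \in S}\lambda_i)$ is irrational, so as usual one works with sufficiently accurate rational approximations (or a standard perturbation of the feasible region) so that the bit-complexity guarantees of the ellipsoid method apply; this does not affect the polynomial running time.
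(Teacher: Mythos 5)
Your proof is correct and follows essentially the same route as the paper: invoke Lemma~\ref{lem:natural-lp-structure}, observe that for each offline vertex the worst subset can be taken to be a prefix in nonincreasing order of $x_{ij}/\lambda_i$, and check the $|I|+1$ prefixes inside the ellipsoid method. You actually spell out a step the paper leaves implicit (why the maximum over each linear piece of the fractional-knapsack value is attained at a breakpoint) and flag the bit-complexity issue with the irrational right-hand sides, both of which are welcome but do not change the argument.
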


\begin{proof}
    It suffices to find a separation oracle, in particular, for the second constraint of the natural LP.
    To do so, we propose an algorithm that for each offline vertex $j$ finds a subset of its neighborhood $S \subseteq I$ that maximizes:
    \[
        \sum_{i \in S} x_{ij} + \exp \Big( - \sum_{i \in S} \lambda_i \Big)
        ~.
    \]

    By Lemma~\ref{lem:natural-lp-structure}, this is equivalent to finding $0 \le \mu_i \le \lambda_i$ for $i \in I$ that maximizes:
    \[
        \sum_{i \in I} \frac{\mu_i x_{ij}}{\lambda_i} + \exp \Big(- \sum_{i \in I} \mu_i \Big)
        ~.
    \]

    For any fixed value of $\sum_{i \in I} \mu_i$ and thus the second term, the first term $\sum_{i \in I} \frac{\mu_i x_{ij}}{\lambda_i}$ is maximized when we assign $\mu_i$ greedily in descending order of $\frac{x_{ij}}{\lambda_i}$.
    Hence, the algorithm sorts $j \in J$ in descending order of $\frac{x_{ij}}{\lambda_j}$, and checks the constraint only for subsets $S$ comprised of the first $k$ elements in that order for $1 \le k \le |I|$.
\end{proof}

\subsection{A Converse of Jensen's Inequality}

For any convex function $f$, Jensen's inequality asserts that for any $j \in J$ (recall that $\Lambda = \sum_{i \in I} \lambda_i$):
\[
    \sum_{i \in I} \lambda_i f \Big( \frac{x_{ij}}{\lambda_i} \Big) \ge \Lambda f \Big( \frac{\sum_{i \in I} x_{ij}}{\Lambda} \Big)
    ~.
\]

On the other hand, the constraints of the natural LP bound how wide-spread the mass could be, leading to a converse of Jensen's inequality.

\begin{lemma}\label{lem:converse-jensen-general}
    For any convex function $f$ satisfying $f(0) = 0$, any offline vertex $j \in J$, and any feasible assignment $\big( x_{ij} \big)_{(i, j) \in E}$ of the natural LP:
    \[
        \sum_{i \in I} \lambda_i f \big( \frac{x_{ij}}{\lambda_i} \big) \le \int_0^{-\ln(1-x_j)} f \big( e^{-\lambda} \big) d \lambda
        ~.
    \]
\end{lemma}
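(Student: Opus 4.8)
The plan is to turn both sides into integrals of $f$ against a one‑dimensional ``mass profile'' on $[0,\Lambda]$, and to show that the exponential profile $\lambda\mapsto e^{-\lambda}$ is the extremal (most mass‑concentrated) profile compatible with the LP constraints.

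First I would sort the online types $i\in I$ with $\lambda_i>0$ in nonincreasing order of $x_{ij}/\lambda_i$ and lay them out on $[0,\Lambda]$: let $\rho\colon[0,\Lambda]\to[0,1)$ be the step function equal to $x_{ij}/\lambda_i$ on a subinterval of length $\lambda_i$ (each ratio is below $1$ because the single‑set case of the second constraint gives $x_{ij}\le 1-e^{-\lambda_i}<\lambda_i$). By construction $\rho$ is nonincreasing and $\int_0^\Lambda f(\rho(\lambda))\,d\lambda=\sum_i\lambda_i f(x_{ij}/\lambda_i)$, the left‑hand side. Writing $R(t)=\int_0^t\rho$, I would apply Lemma~\ref{lem:natural-lp-structure} with $\mu_i$ equal to $\lambda_i$ on an initial block of types, a fractional value on one type, and $0$ elsewhere, so that $\sum_i \mu_i x_{ij}/\lambda_i = R(t)$ and $\sum_i\mu_i=t$; this gives $R(t)\le 1-e^{-t}$ for \emph{every} $t\in[0,\Lambda]$, not just at breakpoints. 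Also $R(\Lambda)=x_j$ and, taking $S=I$ in the second constraint, $x_j\le 1-e^{-\Lambda}$, hence $T:=-\ln(1-x_j)\le\Lambda$. Now set $\rho^*(\lambda)=e^{-\lambda}$ for $\lambda\le T$ and $\rho^*(\lambda)=0$ for $\lambda\in(T,\Lambda]$; since $f(0)=0$ we have $\int_0^\Lambda f(\rho^*(\lambda))\,d\lambda=\int_0^T f(e^{-\lambda})\,d\lambda$, exactly the right‑hand side, and $R^*(t):=\int_0^t\rho^*=\min\{1-e^{-t},\,x_j\}$. Thus $\rho$ and $\rho^*$ are nonincreasing, nonnegative, have the same total mass $x_j$, and $R(t)\le R^*(t)$ for all $t$ (using $R(t)\le 1-e^{-t}$ and $R(t)\le R(\Lambda)=x_j$).

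It then remains to prove the continuous majorization inequality: if $\rho,\rho^*\colon[0,\Lambda]\to[0,\infty)$ are nonincreasing with $\int_0^t\rho\le\int_0^t\rho^*$ for all $t$ and $\int_0^\Lambda\rho=\int_0^\Lambda\rho^*$, then $\int_0^\Lambda f(\rho)\le\int_0^\Lambda f(\rho^*)$ for every convex $f$ with $f(0)=0$. I would prove this by a layer‑cake computation. With $f'$ the right derivative (nondecreasing, finite on $[0,1)$) and $m(s),m^*(s)$ the distribution functions of $\rho,\rho^*$, one gets $\int_0^\Lambda f(\rho)=\int_0^1 f'(s)\,m(s)\,ds$ and likewise for $\rho^*$. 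The partial‑integral domination $R\le R^*$ yields, for each $t$, $\int_0^\Lambda(\rho-t)^+=\max_u\bigl(R(u)-tu\bigr)\le\max_u\bigl(R^*(u)-tu\bigr)=\int_0^\Lambda(\rho^*-t)^+$ (the representation $\int(\rho-t)^+=\max_u(R(u)-tu)$ holds since $R(u)-tu$ is concave with slope $\rho(u)-t$), and by layer‑cake this equals $\int_t^1(m-m^*)\le 0$; while $\int_0^1(m-m^*)=0$ by equal total mass. Shifting $f'$ by the constant $f'(0^+)$ (harmless because $\int(m-m^*)=0$) to make it nonnegative and writing $f'(s)-f'(0^+)=\int_{(0,s]}df'$, Fubini turns $\int_0^1\bigl(f'(s)-f'(0^+)\bigr)(m(s)-m^*(s))\,ds$ into $\int_{(0,1]}\bigl(\int_\sigma^1(m-m^*)\bigr)\,df'(\sigma)\le 0$, since $df'$ is a nonnegative measure. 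This gives the claim, and combining with the identifications of the two sides finishes the proof.

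The main obstacle is this last majorization step. From the exponentially many LP inequalities I must extract the single clean fact that $e^{-\lambda}$ is the most concentrated admissible profile (that part is exactly Lemma~\ref{lem:natural-lp-structure}), and then argue — carefully, since $f$ is merely convex rather than smooth — that redistributing mass toward smaller $\lambda$ can only increase $\int f(\rho)$. This is the Hardy–Littlewood–Pólya phenomenon in continuous form, and the ``$\max_u(R(u)-tu)$'' representation of $\int(\rho-t)^+$ together with an integration by parts against the Lebesgue–Stieltjes measure $df'$ is, I expect, the cleanest way to handle the non‑smoothness. Everything else — the reduction to one dimension, identifying the two integrals, and the bookkeeping $T\le\Lambda$ — is routine.
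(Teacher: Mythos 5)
Your proposal is correct, and its overall strategy coincides with the paper's: both reduce the claim to a one-dimensional statement by sorting the types in nonincreasing order of $x_{ij}/\lambda_i$, both use Lemma~\ref{lem:natural-lp-structure} to show that the resulting cumulative mass profile is dominated pointwise by $G(t)=\min\{x_j,\,1-e^{-t}\}$, and both then invoke a convexity argument to conclude that the exponential profile maximizes $\int f$. Where you genuinely differ is in how that last step is carried out. The paper writes $f(h(\lambda))$ as a double integral against $f''$, changes the order of integration to get $\int_0^1\bigl(H(h^{-1}(z))-z\,h^{-1}(z)\bigr)f''(z)\,dz$, and finishes with a pointwise concavity/first-order-optimality argument; this implicitly assumes $f$ is twice differentiable (or that $f''$ is interpreted as a measure) and is somewhat terse at the final step. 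You instead prove the continuous Hardy--Littlewood--P\'olya majorization inequality directly via the layer-cake identity $\int_0^\Lambda f(\rho)=\int_0^1 f'(s)\,m(s)\,ds$, the representation $\int_0^\Lambda(\rho-t)^+=\max_u\bigl(R(u)-tu\bigr)$, and integration against the Lebesgue--Stieltjes measure $df'$. This buys you a proof valid for merely convex $f$ with no smoothness, which is actually the regime the paper needs (the functions $f(x)=\max\{(\beta+1)x-\beta,0\}$ used in Lemmas~\ref{lem:basic-sampling-property} and~\ref{lem:advanced-sampling-property} are piecewise linear), at the cost of a somewhat longer argument. All the supporting steps you flag as routine (the bound $R(t)\le 1-e^{-t}$ at non-breakpoint $t$ via fractional $\mu_i$, the identity $T=-\ln(1-x_j)\le\Lambda$, the sign of $\int_\sigma^1(m-m^*)$) do check out.
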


\begin{proof}
    We will prove a more general result.
    Let $G(\lambda) = \min \big\{ x_j, 1 - e^{-\lambda} \big\}$ and let its derivate be:
    \[
        g(\lambda) = \begin{cases}
            e^{-\lambda} & \text{ if } \lambda \le - \ln (1 - x_j) ~; \\
            0 & \text{ otherwise.}
        \end{cases}
    \]
    
    Further consider an arbitrary differentiable $H$ such that $H(\lambda) \le G(\lambda)$ for all $\lambda \ge 0$; let $h$ denote its derivative.
    We claim that:
    \begin{equation}
        \label{eqn:converse-jensen-general}
        \int_0^\infty f \big( h(\lambda) \big) d\lambda \le \int_0^{-\ln(1-x_j)} f \big( e^{-\lambda} \big) d\lambda
        ~,
    \end{equation}
    where equality holds when $H = G$.
    Assume without loss of generality that $I = \{1, 2, \dots, |I|\}$ and $\frac{x_{ij}}{\lambda_i}$ is nonincreasing in $i$.
    The lemma follows as a special case when:
    \[
        H(\lambda) = \begin{cases}
            \sum_{i=1}^{k-1} x_{ij} + \frac{\lambda - \sum_{i=1}^{k-1} \lambda_i}{\lambda_{k}} x_{kj} &
            \text{ if } \sum_{i=1}^{k-1} \lambda_i \le \lambda < \sum_{i=1}^k \lambda_i \text{ for some $1 \le k \le |I|$;} \\
            x_j &
            \text{ if } \lambda \ge \sum_{i \in I} \lambda_i ~.
        \end{cases}
    \]

    Next we prove the general inequality in Eqn.~\eqref{eqn:converse-jensen-general}:
    \begin{align*}
        \int_0^\infty f \big( h(\lambda) \big) d\lambda
        &
        = \int_0^\infty \int_0^{h(\lambda)} f'(y) dy d\lambda \\
        &
        = \int_0^\infty \int_0^{h(\lambda)} \int_0^y f''(z) dz dy d\lambda \\
        &
        = \int_0^1 \big( H \big( h^{-1}(z) \big) - z h^{-1}(z) \big) f''(z) dz
        && \text{(change order of integration)} \\
        &
        \le \int_0^1 \big( G \big( h^{-1}(z) \big) - z h^{-1}(z) \big) f''(z) dz
        ~.
        && \text{($H(\lambda) \le G(\lambda)$)}
    \end{align*}

    Since $G(y) - zy$ is a concave function of $y$ and its derivative equals $0$ when $g(y) = z$, the maximum is achieved when $g(h^{-1}(z)) = z$, i.e., if $G = H$ and $g = h$.
\end{proof}

\section{Meta Algorithm}

This section presents a meta algorithm and establishes its properties.
It captures the algorithms in this paper, and the algorithms by \citet{ManshadiOS:MOR:2012} and \citet{JailletL:MOR:2014} as special cases. 

Upon the arrival of an online vertex, sample a pair of neighbors $(j,k)$ from a distribution that depends on its type $i$, independent to the sampled pairs for previous online vertices.
Then try $j$ as the first option.
If $j$ is already matched, continue to try $k$ as the second option.
We further define a dummy neighbor $\emp$, which will always be treated as already matched.
Hence, we may drop the first or the second option by letting $j = ~\emp$ or $k = ~\emp$.
Let $J^*=J \cup \set{\emp}$ be the extended set of offline vertices.

Formally, the algorithm is parameterized by a collection of distributions $D_i = \Delta \left( J^* \times J^* \right)$ for all $i \in I$.
Let $D_i(j, k)$ denote the probability of sampling $(j, k)$ from $D_i$.
See Algorithm \ref{alg:pair-sampling-unweighted}.

\begin{algorithm}[Pair Sampling]
    \label{alg:pair-sampling-unweighted}
    For each online vertex coming, say, of type $i$:
    \begin{enumerate}[itemsep=0pt, topsep=2pt]
        \item Sample $(j,k)$ from $D_i$.
        \item Match $i$ to $j$ if $j \neq \emp$ and it is not yet matched.
        \item Otherwise, match $i$ to $k$ if it is not yet matched.
    \end{enumerate}
\end{algorithm}

\subsection{Extended Types and Independence Properties}

We extend the type $i$ of an online vertex to be a tuple $(i,j,k)$ if the meta algorithm samples $(j, k)$.
Further, we say that an online vertex has type $(i,*,*)$ if its type is $(i,j,k)$ for some $j,k$, and likewise for types $(*,j,*)$ and $(*,j,k)$.
Let $\mu_{jk} = \sum_{i \in I} \lambda_i D_i(j,k)$ be the expected number of online vertices for which the algorithm samples pair $(j,k)$, for any $j, k \in J^*$.
Similarly, let $\mu_j=\sum_{k\in J^*}\mu_{jk}$ be the expected number of online vertices for which the algorithm samples $j$ as the first entry, for any $j \in J$.
Here we intentionally leave out the case of $j =~\emp$ in the definition of $\mu_j$ because the analysis will handle the dummy vertex separately.
The Poisson arrival model implies the following independence properties, which hold only asymptotically in online stochastic matching (see, e.g., Lemma 4 of \citet{JailletL:MOR:2014}).

\begin{lemma}\label{lem:independence-pair}
	In the Poisson arrival model, for any $j, k \in J^*$, online vertices of type $(*,j,k)$ follow a Poisson process with arrival rate $\mu_{jk}$, independent across different $(j,k)$ pairs.
\end{lemma}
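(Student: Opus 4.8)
The plan is to derive the statement from two classical facts about Poisson processes: the \emph{thinning} (equivalently, \emph{marking} or \emph{coloring}) theorem and the \emph{superposition} theorem. Recall that in the Poisson arrival model the online vertices of each type $i \in I$ arrive according to an independent Poisson process of rate $\lambda_i$, and that upon each arrival the meta algorithm samples a pair $(j,k) \in J^* \times J^*$ from $D_i$, independently of the arrival times and of the samples drawn for all other online vertices. This per-vertex independence is precisely what makes the argument below go through with genuine (rather than asymptotic) independence; the online stochastic matching model only gives an asymptotic version, cf.\ Lemma 4 of \citet{JailletL:MOR:2014}.

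First I would fix a type $i$ and regard the sampling of $(j,k)$ as attaching an independent \emph{mark} from the finite set $J^* \times J^*$ to each point of the rate-$\lambda_i$ process, mark $(j,k)$ being chosen with probability $D_i(j,k)$. By the thinning theorem, the sub-streams obtained by keeping only the points with a prescribed mark are independent Poisson processes, the one for mark $(j,k)$ having rate $\lambda_i D_i(j,k)$; this is exactly the stream of online vertices of extended type $(i,j,k)$. Since the $|I|$ type-processes are independent of one another and their markings are drawn independently, the whole family of processes $\{(i,j,k) : i \in I,\ (j,k) \in J^* \times J^*\}$ is mutually independent, each $(i,j,k)$ being Poisson of rate $\lambda_i D_i(j,k)$.

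Next, for a fixed pair $(j,k)$, the stream of vertices of type $(*,j,k)$ is by definition the superposition of the streams of type $(i,j,k)$ over all $i \in I$. By the superposition theorem for independent Poisson processes, this is again a Poisson process, with rate $\sum_{i \in I} \lambda_i D_i(j,k) = \mu_{jk}$. For distinct pairs $(j,k) \neq (j',k')$, the streams of type $(*,j,k)$ and $(*,j',k')$ are assembled from disjoint sub-collections of the mutually independent family built in the previous step, hence are themselves independent; the same reasoning applies jointly to all pairs at once, yielding mutual independence across $(j,k)$.

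There is no serious obstacle here: the content lies entirely in invoking the correct off-the-shelf statements and in noting that the meta algorithm's sampling is independent of the arrival process and across vertices. The only point that warrants a moment's care is the independence \emph{across} the pairs $(j,k)$ — it does not suffice that each $(*,j,k)$ stream is marginally Poisson of rate $\mu_{jk}$; one must observe that these streams \emph{partition} (not merely cover) the mutually independent family $\{(i,j,k)\}$, which is what licenses the joint independence conclusion.
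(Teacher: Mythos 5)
Your proposal is correct and is essentially the paper's argument, spelled out in full: the paper's one-line proof invokes exactly this thinning property (marking each arrival of the aggregate rate-$\Lambda$ Poisson stream with the sampled pair $(j,k)$, which occurs with probability $\mu_{jk}/\Lambda$), and your thin-then-superpose decomposition over types $i$ is an equivalent way of organizing the same classical facts. Your closing remark that the $(*,j,k)$ streams partition a mutually independent family is a worthwhile clarification of the independence claim, but it does not constitute a different route.
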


\begin{proof}
    It holds because the online vertices follow a Poisson process and the probability that an online vertex samples $(j, k)$ is $\mu_{jk}$.
\end{proof}

As a corollary, we have a similar property for types $(*, j, *)$ for all $j \in J$.

\begin{lemma}
    \label{lem:independence-first}
    In the Poisson arrival model, independently for any $j \in J^*$, online vertices of type $(*,j,*)$ follow a Poisson process.
	The arrival rate is $\mu_j$ for any $j \in J$.
\end{lemma}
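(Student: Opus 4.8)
The plan is to obtain this as a direct corollary of Lemma~\ref{lem:independence-pair} via the standard fact that superposing independent Poisson processes yields a Poisson process whose rate is the sum of the rates. For a fixed $j \in J^*$, an online vertex has type $(*, j, *)$ precisely when it has type $(*, j, k)$ for some $k \in J^*$. By Lemma~\ref{lem:independence-pair}, for each $k \in J^*$ the vertices of type $(*, j, k)$ form a Poisson process of rate $\mu_{jk}$, and these processes are mutually independent across all pairs $(j, k)$. Hence, fixing $j$ and ranging over $k \in J^*$, the process of type-$(*, j, *)$ vertices is the superposition of the independent Poisson processes $\{(*, j, k)\}_{k \in J^*}$, which is therefore Poisson with rate $\sum_{k \in J^*} \mu_{jk} = \mu_j$ when $j \in J$ (and some rate $\mu_\emp := \sum_{k \in J^*}\mu_{\emp k}$ when $j = \emp$, which we need not name since the dummy vertex is handled separately).

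The remaining claim is the \emph{independence across different $j \in J^*$}. This follows because the collection $\{(*, j, k)\}_{(j,k)}$ is jointly independent by Lemma~\ref{lem:independence-pair}, and for distinct $j_1, j_2$ the two families $\{(*, j_1, k)\}_{k}$ and $\{(*, j_2, k)\}_{k}$ involve disjoint index sets of pairs; a function of one family is therefore independent of a function of the other. Since the type-$(*, j_1, *)$ process is determined by the first family and the type-$(*, j_2, *)$ process by the second, the two are independent, and the same argument extended to all $j$ simultaneously gives joint independence.

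I expect no real obstacle here; the only point requiring a little care is making sure the superposition/independence argument is invoked at the right level of generality — i.e., that we are not merely getting pairwise independence of the per-$j$ processes but joint independence of the whole collection $\{(*, j, *)\}_{j \in J^*}$, which is what later sections will use. This is immediate from the joint independence already supplied by Lemma~\ref{lem:independence-pair} together with the disjointness of the pair-index sets grouped by first coordinate. One could alternatively reprove the lemma from scratch by the same reasoning as Lemma~\ref{lem:independence-pair} — the probability an online vertex is of type $(*, j, *)$ is $\sum_k \mu_{jk}/\Lambda$ (per unit rate), and thinning an independent Poisson stream by type yields independent Poisson sub-streams — but deriving it as a corollary of the already-stated pair lemma is cleaner.
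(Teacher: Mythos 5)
Your proposal is correct and matches the paper's intent exactly: the paper states this lemma as an immediate corollary of Lemma~\ref{lem:independence-pair} (superposition of the independent Poisson processes indexed by pairs with first coordinate $j$), giving no further proof, and your write-up simply spells out that superposition argument together with the joint independence across distinct first coordinates.
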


\subsection{Probability of Matching an Offline Vertex}

For an offline vertex type $j \in J$, $j$ may be matched in the following ways:
\begin{enumerate}
    \item $j$ is matched by an online vertex of type $(*,j,*)$;
    \item $j$ is matched by an online vertex of type $(*,\emp,j)$;
    \item Some $k\ne j$ is matched by type $(*,k,*)$ before the appearance of $(*,k,j)$.
\end{enumerate}
We remark that the above list is not exhaustive in general.
For example, three consecutive online vertices of type $(*,\ell,*), (*,\ell,k), (*,k,j)$ for some $k,\ell\neq j$ may match $\ell,k$ and finally $j$.

The probability that $j$ is matched by the first two cases is straightforward.
We next compute the probability that $j$ is matched by the last case.

\begin{lemma}
	\label{lem:prob-second}
	Consider any offline vertex $j \in J$.
	For any other offline vertex $k \in J \setminus \{j\}$, the probability that there is at least one online vertex of type $(*,k,j)$ after the first appearance of type $(*,k,*)$ is:
    \[
        \begin{cases}
            1 - \frac{\mu_k}{\mu_k - \mu_{kj}} e^{-\mu_{kj}} + \frac{\mu_{kj}}{\mu_k - \mu_{kj}} e^{-\mu_{k}} & \mu_k \ne \mu_{kj} ~; \\
            1 - e^{-\mu_k} - \mu_k e^{-\mu_k} & \mu_k = \mu_{kj} ~.
        \end{cases}
    \]
    Further, this is independent for different $k \in J\backslash\set{j}$, and is independent to online vertices of type $(*, j, *)$ and $(*, \emp, *)$.
\end{lemma}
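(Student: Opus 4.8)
The plan is to push everything through the independence structure of Lemma~\ref{lem:independence-pair}. Normalize the arrival interval to $[0,1]$. By Lemma~\ref{lem:independence-pair} the type-$(*,k,j)$ arrivals form a Poisson process of rate $\mu_{kj}$ on $[0,1]$, while the remaining type-$(*,k,*)$ arrivals — the superposition of the type-$(*,k,k')$ processes over $k' \in J^* \setminus \{j\}$ — form an independent Poisson process of rate $\mu_k - \mu_{kj}$; together they recover the type-$(*,k,*)$ process of rate $\mu_k$ from Lemma~\ref{lem:independence-first}. Denote by $E_k$ the event that some type-$(*,k,j)$ arrival occurs strictly after the first type-$(*,k,*)$ arrival, and let $A$ be the number of type-$(*,k,j)$ arrivals, so $A \sim \mathrm{Poisson}(\mu_{kj})$.

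Next I would condition on $A$. If $A = 0$ then $E_k$ fails. If $A \ge 2$ then $E_k$ holds, since at most one of the $A$ type-$(*,k,j)$ arrivals can be the earliest type-$(*,k,*)$ arrival, so the remaining $A-1 \ge 1$ of them occur strictly after it. If $A = 1$ then $E_k$ holds exactly when at least one of the rate-$(\mu_k - \mu_{kj})$ ``other'' type-$(*,k,*)$ arrivals precedes the unique type-$(*,k,j)$ arrival; conditioned on $A=1$ that arrival time is uniform on $[0,1]$ and independent of the other process, so a one-line integral gives $\Pr[E_k \mid A=1] = \int_0^1 \big(1 - e^{-(\mu_k-\mu_{kj})t}\big)\,dt = 1 - \tfrac{1-e^{-(\mu_k-\mu_{kj})}}{\mu_k-\mu_{kj}}$. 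Combining this with $\Pr[A \ge 2] = 1 - (1+\mu_{kj})e^{-\mu_{kj}}$ and $\Pr[A=1] = \mu_{kj}e^{-\mu_{kj}}$ and simplifying yields the claimed closed form when $\mu_k \ne \mu_{kj}$. When $\mu_k = \mu_{kj}$ there are no ``other'' type-$(*,k,*)$ arrivals, so $E_k$ is simply the event of at least two type-$(*,k,j)$ arrivals, with probability $1 - e^{-\mu_k} - \mu_k e^{-\mu_k}$; this also equals the limit of the generic expression as $\mu_{kj} \to \mu_k$.

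Finally, the independence assertions follow from the observation that $E_k$ is a measurable function of the processes $\{(*,k,k') : k' \in J^*\}$ alone, since it only refers to type-$(*,k,*)$ arrival times and to which of them are of type $(*,k,j)$. As $k$ ranges over $J \setminus \{j\}$ these index sets are pairwise disjoint and also disjoint from $\{(*,j,k') : k' \in J^*\}$ and $\{(*,\emp,k') : k' \in J^*\}$, so Lemma~\ref{lem:independence-pair} directly yields mutual independence of the $E_k$ together with their joint independence from the type-$(*,j,*)$ and type-$(*,\emp,*)$ processes. I expect the only genuinely delicate point to be the case analysis for $E_k^c$ — in particular recognizing that ``no type-$(*,k,j)$ arrival after the first type-$(*,k,*)$ arrival'' decomposes cleanly into $A=0$ and ``$A=1$ with that arrival coming first'' — and the correct handling of the degenerate case $\mu_k = \mu_{kj}$; the integral and the algebraic simplification are routine.
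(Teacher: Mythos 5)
Your proof is correct, and it reaches the closed form by a different decomposition than the paper. The paper conditions on the total number $\ell$ of type-$(*,k,*)$ arrivals: given $\ell$, each arrival after the first is independently of type $(*,k,j)$ with probability $\mu_{kj}/\mu_k$ (discrete thinning), so the answer is $\sum_{\ell\ge 2}\frac{\mu_k^\ell e^{-\mu_k}}{\ell!}\big(1-(1-\tfrac{\mu_{kj}}{\mu_k})^{\ell-1}\big)$, which is then evaluated via Taylor series. You instead condition on the number $A$ of type-$(*,k,j)$ arrivals, observe that only $A=1$ is nontrivial, and there use the standard facts that a Poisson process conditioned on one point has a uniform arrival time and that the $(*,k,j)$ process is independent of the superposed rate-$(\mu_k-\mu_{kj})$ remainder; a single elementary integral then gives the answer. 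Your route buys a cleaner treatment of the degenerate case $\mu_k=\mu_{kj}$ (no ``other'' arrivals exist, so the event is just $A\ge 2$) and replaces the series manipulation with one integral, at the cost of invoking continuous arrival times and the order-statistics property of Poisson processes; the paper's route works purely with counts and discrete thinning and never needs to normalize a time horizon. The independence argument at the end is the same in substance as the paper's (disjointness of the underlying index sets of pair types, via Lemma~\ref{lem:independence-pair}), just spelled out more explicitly. I verified your algebra for the $\mu_k\ne\mu_{kj}$ case: combining $\Pr[A\ge 2]$ with $\Pr[A=1]\cdot\big(1-\tfrac{1-e^{-(\mu_k-\mu_{kj})}}{\mu_k-\mu_{kj}}\big)$ does simplify to $1-\tfrac{\mu_k}{\mu_k-\mu_{kj}}e^{-\mu_{kj}}+\tfrac{\mu_{kj}}{\mu_k-\mu_{kj}}e^{-\mu_k}$.
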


\begin{proof}
    By Lemma~\ref{lem:independence-first}, the probability of having $\ell \ge 2$ online vertices of type $(*, k, *)$ is $\frac{\mu_k^\ell e^{-\mu_k}}{\ell!}$.
    For each of these online vertices, except the first one, its type is $(*, k, j)$ independently with probability $\frac{\mu_{kj}}{\mu_k}$ by Lemma~\ref{lem:independence-pair}.
    Therefore, the probability in the lemma is:
    \[
		\sum_{\ell = 2}^\infty \frac{\mu_{k}^\ell e^{-\mu_{k}}}{\ell!} \Big( 1 - \Big( 1 - \frac{\mu_{kj}}{\mu_k} \Big)^{\ell-1} \Big) = 
		\sum_{\ell = 2}^\infty \frac{\mu_{k}^\ell e^{-\mu_{k}}}{\ell!} 
        - 
        \sum_{\ell = 2}^\infty \frac{\mu_{k} (\mu_k - \mu_{kj})^{\ell-1} e^{-\mu_{k}}}{\ell!} 
        ~.
    \]

    By the Taylor series of $e^x$:
    \begin{align*}
        \sum_{\ell = 2}^\infty \frac{\mu_{k}^\ell e^{-\mu_{k}}}{\ell!}
        &
        = 1 - e^{-\mu_k} - \mu_k e^{-\mu_k} \\
        \sum_{\ell = 2}^\infty \frac{\mu_{k} (\mu_k - \mu_{kj})^{\ell-1} e^{-\mu_{k}}}{\ell!}
        &
        =
        \begin{cases}
            \frac{\mu_k e^{-\mu_{k}}}{\mu_k-\mu_{kj}} \left(e^{\mu_{k} - \mu_{kj}} - 1 - (\mu_{k} - \mu_{kj})\right) & \mu_k \ne \mu_{kj} ~; \\[1ex]
            0 & \mu_k = \mu_{kj} ~.
        \end{cases}
    \end{align*}

    Grouping terms by $e^{-\mu_k}$ and $e^{-\mu_{kj}}$ gives the probability in the lemma.
    Finally, the independence follows by Lemma~\ref{lem:independence-pair} and Lemma~\ref{lem:independence-first}.
\end{proof}

\paragraph{Auxiliary Function $\bm{\phi}$.}
Define $\phi(x,y)$ so that $e^{\phi(\mu_j, \mu_{jk})}$ equals the probability of having no online vertex of type $(*,j,k)$ after the first appearance of type $(*,j,*)$, including the case of having no vertex of type $(*,j,k)$.
In other words, with probability $e^{\phi(\mu_j, \mu_{jk})}$ the algorithm has never tried to match an online vertex of type $(*, j, *)$ to $k$.
By Lemma~\ref{lem:prob-second}:
\[
    \phi(x, y) \defeq \begin{cases}
        \ln \big( \frac{x}{x - y} e^{-y} - \frac{y}{x - y} e^{-x} \big) & x \ne y ~; \\[1ex]
        \ln (1+x) - x & x = y ~.
    \end{cases}
\]


%


\begin{lemma}
    \label{lem:prob-lower-bound}
    For any offline vertex $j \in J$, the meta algorithm matches it with probability at least:
    \[
        1 - e^{-\mu_j} \cdot e^{-\mu_{\emp j}} \cdot e^{\sum_{k \in J \setminus \set{j}} \phi(\mu_k,\mu_{kj})}
        ~.
    \]
\end{lemma}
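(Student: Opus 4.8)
The plan is to prove the lower bound by exhibiting three events, one for each of the three matching cases listed above, such that (i) whenever any of them occurs, $j$ ends up matched, and (ii) the events are independent and their individual failure probabilities are $e^{-\mu_j}$, $e^{-\mu_{\emp j}}$, and $e^{\phi(\mu_k,\mu_{kj})}$ for $k \in J\setminus\set{j}$. Multiplying the failure probabilities and taking the complement then yields exactly the claimed bound; the non-exhaustiveness of the list is harmless because we only need a lower bound on the matching probability.

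First I would establish sufficiency of each event. For case 1, consider the first online vertex of type $(*,j,*)$: it tries $j$ as its first option, so either it matches $j$, or $j$ was already matched; in both cases $j$ is matched. Case 2 is the same argument applied to a vertex of type $(*,\emp,j)$, using that the dummy neighbor $\emp$ is always treated as matched, so such a vertex always proceeds to try $j$. Case 3 needs slightly more care: on the event that there are at least two online vertices of type $(*,k,*)$ with at least one of type $(*,k,j)$ arriving after the first of them, the first $(*,k,*)$ vertex leaves $k$ matched (by the case-1 argument applied to $k$), hence $k$ stays matched by irrevocability, so when the later $(*,k,j)$ vertex arrives it finds $k$ matched and tries $j$ as its second option, matching $j$.

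Next I would read off the failure probabilities. By Lemma~\ref{lem:independence-first}, online vertices of type $(*,j,*)$ form a Poisson process with rate $\mu_j$, so case~1 fails with probability $e^{-\mu_j}$. By Lemma~\ref{lem:independence-pair}, online vertices of type $(*,\emp,j)$ form a Poisson process with rate $\mu_{\emp j}$, so case~2 fails with probability $e^{-\mu_{\emp j}}$. By the definition of $\phi$, which is precisely the log-probability (via Lemma~\ref{lem:prob-second}) that no $(*,k,j)$ vertex appears after the first $(*,k,*)$ vertex, case~3 for a given $k$ fails with probability $e^{\phi(\mu_k,\mu_{kj})}$. Since these three families of events are driven by disjoint collections of sampled pairs — pairs with first entry $j$, the single pair $(\emp,j)$, and for each $k\neq j$ the pairs with first entry $k$ — Lemma~\ref{lem:independence-pair} together with the independence assertion in Lemma~\ref{lem:prob-second} gives that all the failure events are mutually independent, so their joint probability is the product $e^{-\mu_j}\, e^{-\mu_{\emp j}}\, e^{\sum_{k \in J\setminus\set{j}} \phi(\mu_k,\mu_{kj})}$.

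The main obstacle is the bookkeeping in case~3: one must argue carefully that the later $(*,k,j)$ vertex is strictly after the vertex that caused $k$ to be matched (which is why Lemma~\ref{lem:prob-second} is phrased with an appearance \emph{after} the first $(*,k,*)$ vertex and requires $\ell \ge 2$), and that irrevocability of the algorithm's matching is exactly what lets us conclude $k$ is still matched at that later arrival. Everything else is a direct combination of the independence lemmas and the definition of $\phi$.
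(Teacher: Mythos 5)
Your proposal is correct and follows essentially the same route as the paper: identify the three sufficient events, observe that they are driven by disjoint collections of sampled pairs (hence independent by Lemmas~\ref{lem:independence-pair}, \ref{lem:independence-first}, and~\ref{lem:prob-second}), multiply the failure probabilities $e^{-\mu_j}$, $e^{-\mu_{\emp j}}$, and $e^{\phi(\mu_k,\mu_{kj})}$, and take the complement. The only difference is that you spell out the sufficiency arguments (including the irrevocability point in case~3) more explicitly than the paper, which is a welcome addition rather than a deviation.
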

\begin{proof}
    The algorithm does not match $j$ if and only if none of the following events happen:
    (1) there is no online vertex of type $(*, j, *)$;
    (2) there is no online vertex of type $(*, \emp, j)$; and
    (3) there is no online vertex of type $(*, k, j)$ after the first online vertex of type $(*, k, *)$, for some $k \in J \setminus \{ j \}$.
    There events are independent, and happen with probability $e^{-\mu_j}$, $e^{-\mu_{\emp j}}$, and $e^{\phi(\mu_k, \mu_{kj})}$ respectively by Lemma~\ref{lem:independence-pair}, Lemma~\ref{lem:independence-first}, and the definition of auxiliary function $\phi$.
\end{proof}

\subsection{Properties of the Auxiliary Function}

\begin{lemma}
	\label{lem:f-property}
    For any $x, y \in [0, 1]$, $\phi(x,y)$ is non-increasing and convex w.r.t.\ each coordinate.
\end{lemma}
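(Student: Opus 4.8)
The plan is to reduce both assertions to three elementary facts about the single‑variable function $\psi(D) \defeq \frac{e^D-1}{D}$ (with $\psi(0)=1$). First I would rewrite $\phi$ in the form
\[
    \phi(x,y) = -x + \ln\bigl(1 + x\,\psi(x-y)\bigr),
\]
obtained from the $x\neq y$ branch by factoring $e^{-x}$ out of the numerator and substituting $y = x-(x-y)$; the $x=y$ branch $\ln(1+x)-x$ is the special case $D=0$ since $\psi(0)=1$. I would also record that $\phi$ is symmetric, $\phi(x,y)=\phi(y,x)$ — swapping $x$ and $y$ negates both the numerator and the denominator inside the logarithm — so it suffices to prove that $\phi$ is non‑increasing and convex in the second coordinate.

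The three facts about $\psi$ all come from the representation $\psi(D) = \int_0^1 e^{tD}\,dt$. Differentiating under the integral gives $\psi^{(k)}(D) = \int_0^1 t^k e^{tD}\,dt > 0$, so in particular $\psi, \psi', \psi'' > 0$; and Cauchy–Schwarz for the measure $e^{tD}\,dt$ on $[0,1]$ gives log‑convexity,
\[
    \psi(D)\,\psi''(D) - \psi'(D)^2 = \Bigl(\int_0^1 e^{tD}\,dt\Bigr)\Bigl(\int_0^1 t^2 e^{tD}\,dt\Bigr) - \Bigl(\int_0^1 t\, e^{tD}\,dt\Bigr)^2 \ \ge\ 0 .
\]
Writing $D = x-y$ and differentiating the formula for $\phi$ in $y$ (so $\partial D/\partial y = -1$), one gets
\[
    \pdif{\phi}{y} = \frac{-x\,\psi'(D)}{1 + x\,\psi(D)}, \qquad
    \pdifsqr{\phi}{y} = \frac{x\bigl(\psi''(D) + x\,[\psi(D)\psi''(D) - \psi'(D)^2]\bigr)}{\bigl(1 + x\,\psi(D)\bigr)^2}.
\]
Since $x\ge 0$ and $\psi>0$ (so the denominators are positive) and $\psi'>0$, the first derivative is $\le 0$; since moreover $\psi''\ge 0$ and $\psi\psi''-(\psi')^2\ge 0$, the parenthesized numerator of the second derivative is a sum of two non‑negative terms. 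This proves monotonicity and convexity in $y$, and the symmetry noted above transfers both to $x$.

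I do not anticipate a real obstacle: after the algebraic reduction, monotonicity is one sign check and convexity is Cauchy–Schwarz. The only points requiring a little care are getting the identity for $\phi$ right (and checking it agrees with the $x=y$ branch), and observing that the integral representation — hence all the needed properties of $\psi$ and the positivity of $1 + x\,\psi(D)$ — is valid for every real $D = x-y$, including $y > x$ where $e^{\phi}$ no longer admits the probabilistic interpretation used earlier; this is what lets the conclusion hold on the whole square $[0,1]^2$ and not merely on $\{y \le x\}$.
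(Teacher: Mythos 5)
Your proof is correct, and it takes a genuinely different route from the paper's. The paper differentiates the original closed form of $\phi$ directly in $y$, obtaining first and second partials with denominators $(xe^{-y}-ye^{-x})(x-y)$ and $(xe^{-y}-ye^{-x})^2(x-y)^2$, and then verifies the signs by the elementary inequality $1-x+y\le e^{y-x}$ and by decomposing the numerator of the second derivative into a sum of nonnegative terms, a step that explicitly uses $x-y\in[-1,1]$. You instead reparametrize: the identity $\phi(x,y)=-x+\ln\bigl(1+x\,\psi(x-y)\bigr)$ with $\psi(D)=\frac{e^D-1}{D}=\int_0^1 e^{tD}\,dt$ checks out (factoring $e^{-x}$ from $\frac{xe^{-y}-ye^{-x}}{x-y}$ and substituting $y=x-D$ gives exactly $1+x\psi(D)$, and $D=0$ recovers the $x=y$ branch), and your derivative formulas agree with the paper's after simplification. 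Monotonicity then reduces to $\psi'>0$ and convexity to $\psi''\ge 0$ together with the log-convexity $\psi\psi''\ge(\psi')^2$, which is Cauchy--Schwarz for the measure $e^{tD}\,dt$. Your approach buys several things: it eliminates the case split at $x=y$ (since $\psi$ is smooth at $D=0$ by the integral representation, whereas the paper's formulas have removable singularities there that it does not discuss), it isolates the structural reason for convexity (log-convexity of $\psi$) rather than an ad hoc algebraic rearrangement, and it proves the statement for all $x,y\ge 0$, not just $[0,1]^2$, because nowhere do you need $|x-y|\le 1$. The paper's computation is more self-contained in the sense of requiring no reformulation, but is correspondingly more opaque. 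Both proofs use the symmetry $\phi(x,y)=\phi(y,x)$ to reduce to a single coordinate.
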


\begin{proof}
    By symmetry, it suffices to prove it for $y$. 
    Equivalently, we need the following first-order and second-order partial derivatives in $y$ to be non-positive and non-negative respectively.
    %
    \begin{align*}
        \pdif{}{y} f(x,y) & = \frac{x \big( (1 - x + y) e^{-y} - e^{-x} \big)} {(x e^{-y} - y e^{-x}) (x - y)} ~,
        \\
        \pdifsqr{}{y}\phi(x,y) & = \frac{x \left( x e^{-2y} - y (x - y)^2 e^{-(x + y)} - (x - 2y) e^{-2x} \right)}{(x e^{-y} - y e^{-x})^2 (x - y)^2}
        ~.
    \end{align*}
    
    The first-order derivative is non-positive since $1 - x + y \le e^{-x+y}$.
    For the second-order derivative, consider the equation within the parentheses in the numerator.
    Since $x-y \in [-1, 1]$, it is at least: 
    \[
        x e^{-2y} - y e^{-(x + y)} - (x - 2y) e^{-2x}
        ~.
    \]

    Rearranging terms, we can write it as the sum of $x \big( e^{-y} - e^{-x} \big)^2$, $2 (x - y) e^{-x} \big(e^{-y} - e^{-x}\big)$, and $y e^{-(x+y)}$, all of which are nonnegative for any $x, y \ge 0$.
\end{proof}

As a corollary of its convexity, we obtain an upper bound of the value of $f$.

\begin{lemma}\label{lem:fxy-bound}
	For any $x,y\in[0,1]$, $\phi(x,y)\leq (\ln 2 - 1) xy$.
\end{lemma}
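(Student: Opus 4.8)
The plan is to reduce the claimed bivariate inequality to the single evaluation $\phi(1,1) = \ln 2 - 1$, using that $\phi$ is convex in each coordinate (Lemma~\ref{lem:f-property}) and that $\phi$ vanishes whenever one of its arguments is $0$.

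First I would record the boundary values. Substituting $y = 0$ into the definition of $\phi$ gives $\phi(x,0) = \ln\big(\frac{x}{x}e^{0}\big) = 0$ for $x \neq 0$, and $\phi(0,0) = \ln(1+0) - 0 = 0$; likewise $\phi(0,1) = \ln\big(\frac{1}{1}e^{0}\big) = 0$. Next, fix $x \in [0,1]$ and write $y = y \cdot 1 + (1-y)\cdot 0$; convexity of $y \mapsto \phi(x,y)$ on $[0,1]$ then gives
\[
    \phi(x,y) \le y\,\phi(x,1) + (1-y)\,\phi(x,0) = y\,\phi(x,1) .
\]
Applying the same step to the convex function $x \mapsto \phi(x,1)$, which vanishes at $x = 0$, yields $\phi(x,1) \le x\,\phi(1,1)$. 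Multiplying this by $y \ge 0$ and chaining with the previous inequality gives $\phi(x,y) \le xy\,\phi(1,1)$. Finally, the $x = y$ branch of the definition at $x = y = 1$ gives $\phi(1,1) = \ln(1+1) - 1 = \ln 2 - 1$, which is precisely the asserted bound.

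There is essentially no obstacle: the only points requiring a little care are the boundary evaluations $\phi(x,0) = \phi(0,1) = 0$, and observing that chaining the two one-dimensional bounds introduces no sign issues because $x, y \ge 0$. Note also that the direction of the bound is consistent with $\phi \le 0$, since $\phi(1,1) = \ln 2 - 1 < 0$.
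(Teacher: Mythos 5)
Your proposal is correct and follows essentially the same route as the paper: both use the coordinatewise convexity from Lemma~\ref{lem:f-property} together with the vanishing of $\phi$ on the axes to chain $\phi(x,y) \le y\,\phi(x,1) \le xy\,\phi(1,1) = (\ln 2 - 1)xy$. Your write-up simply makes the boundary evaluations and the sign considerations more explicit than the paper does.
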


\begin{proof}
    By the convexity in Lemma~\ref{lem:f-property}, and that $\phi(x, 0) = \phi(0, y) = \phi(0, 0) =  0$:
    \[
        \phi(x, y) \le \phi(x, 1) \cdot y \le \phi(1, 1) \cdot xy
        ~.
    \]

    The lemma then follows by $\phi(1, 1) = \ln 2 - 1$.
\end{proof}

\subsection{Monotonicity in Unweighted Matching}

This subsection shows that the meta algorithm (Algorithm \ref{alg:pair-sampling-unweighted}) is monotone in the unweighted setting.
Hence, by Theorems \ref{thm:opt-compare} and \ref{thm:alg-compare}, the competitive ratios in Poisson arrival model also holds in online stochastic matching, up to a $1 - O(\Lambda^{-\frac{1}{2}})$ factor.

\begin{lemma}
    \label{lem:monotone-unweighted}
    For any distributions $D_i$'s, the meta algorithm is monotone in the unweighted case of online stochastic matching.
\end{lemma}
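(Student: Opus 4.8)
The plan is to establish the pointwise inequality $\alg_n \le \alg_{n-1}$ for every $n \ge 2$ by a coupling between the instance with $n-1$ online vertices and the instance with $n$ online vertices. Since an online vertex together with the pair $(j,k)$ sampled for it by the meta algorithm is drawn i.i.d., I would couple the two instances so that they share the same first $n-2$ online vertices $v_1, \dots, v_{n-2}$ (with the same sampled pairs), the $(n-1)$-th online vertex of the $(n-1)$-vertex instance equals the $n$-th online vertex of the $n$-vertex instance (call this common vertex $v$), and the $(n-1)$-th online vertex of the $n$-vertex instance is an independent fresh vertex $v'$. Each instance still consists of i.i.d.\ online vertices, so the coupling is legitimate; moreover, since the algorithm's decision for the $t$-th vertex depends only on $v_1, \dots, v_t$, the quantity $\alg_{n-1}$ is exactly the probability that $v$ is matched in the $(n-1)$-vertex instance and $\alg_n$ is exactly the probability that $v$ is matched in the $n$-vertex instance. (In the unweighted case the meta algorithm only ever samples neighbors, so getting matched and collecting weight one coincide, which is what makes $\alg_n$ literally the matching probability of the $n$-th vertex.)

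Next I would run the two coupled instances in lockstep. After processing $v_1, \dots, v_{n-2}$, which are identical in both instances, the two instances reach the same set $M$ of matched offline vertices (given the sampled pairs, Algorithm~\ref{alg:pair-sampling-unweighted} is deterministic). In the $(n-1)$-vertex instance the vertex $v$ then arrives facing the matched set $M$; in the $n$-vertex instance the extra vertex $v'$ is processed first, and since the meta algorithm never releases an already-matched offline vertex, $v$ subsequently arrives facing a superset $M' \supseteq M$.

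The crux is the following monotonicity: for a fixed online vertex with a fixed sampled pair $(j,k)$, the indicator of the event that this vertex gets matched is non-increasing in the current set of matched offline vertices, because by Algorithm~\ref{alg:pair-sampling-unweighted} the vertex is matched iff $j \in J$ is still free, or $j$ is unavailable and $k \in J$ is still free, and both conditions only become harder to satisfy as more offline vertices get matched. Hence on every sample path of the coupling, if $v$ is matched in the $n$-vertex instance (facing $M'$) then it is also matched in the $(n-1)$-vertex instance (facing $M \subseteq M'$); taking probabilities gives $\alg_n \le \alg_{n-1}$, i.e.\ monotonicity. The only step requiring care is this downward monotonicity of the per-vertex matching event in the matched set, which is immediate from the two-option structure of the algorithm and the irrevocability of matches; the remainder is bookkeeping about the coupling.
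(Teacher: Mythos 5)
Your proposal is correct and is essentially the paper's argument made rigorous: the paper simply observes that $\alg_n$ equals the probability that at least one of the freshly sampled offline vertices $j,k$ is still unmatched, and that this is non-increasing because the sampling distributions are time-invariant while the matched set only grows. Your explicit coupling and the downward monotonicity of the per-vertex matching indicator in the matched set are exactly the details the paper leaves implicit.
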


\begin{proof}
    The expected gain from the $\ell$-th online vertex equals the probability that at least one of its sampled offline vertices $j, k$ is still unmatched.
    Since the distribution of online types and the distributions $D_i$'s are time invariant, the above probability is non-increasing in $n$.
\end{proof}

\section{Unweighted Matching}
\label{sec:unweighted}

\subsection{Wasteful Correlated Sampling}
\label{correlated-pair-sampling}

Our starting point is the algorithm by \citet{JailletL:MOR:2014}, which we restate below.
We will refer to it as Wasteful Correlated Sampling because it may sample $k = j$ in some cases, and thus waste the second entry.
Further, we shall denote the sampling distributions as $D_i^{1}$ for $i \in I$ because they fall into a broader family of distributions $D_i^{c}$ for any $c \ge 1$, which we shall explain shortly in the next subsection.
Our final algorithm will be the limit case when $c = \infty$.

Let $\set{x_{ij}}_{i\in I,j\in J}$ be the optimal solution of \ref{eqn:natural-lp}. Define $x_{j}=\sum_{i \in i} x_{ij}$ for all $j\in J$, and $x_{i\emp} = \lambda_i - \sum_{j \in J} x_{ij}$ so that $\sum_{j \in J^*} x_{ij} = \lambda_i$.

\begin{definition}[Wasteful Correlated Sampling]
    \label{def:wasteful-correlated-sample}
    For any online type $i \in I$, a sample $(j, k)$ from $D_i^1$ is generated as follows:
    \begin{enumerate}[itemsep=0pt, topsep=5pt]
        \item
            Consider an interval $[0, \lambda_i)$.
            Align subintervals $I_j \subset [0, \lambda_i)$ of lengths $x_{ij}$ for $j \in J^*$ from left to right.
            See Figure~\ref{fig:correlated-sample}.
        \item
            Sample $\nu \in [0, \lambda_i)$ uniformly at random.
            Let $\nu' = \nu \pm \frac{\lambda_i}{2}$ such that $\nu' \in [0, \lambda_i)$.
            Note that $\nu$ and $\nu'$ are equally distributed.
        \item
            Let $j, k \in J^*$ be such that $\nu \in I_j$ and $\nu' \in I_k$.
        %
    \end{enumerate}
\end{definition}

\begin{figure*}
	\centering
	\begin{tikzpicture}
		\draw(0,0)--(2,0)[red,ultra thick];
		\draw(2,0)--(3.2,0)[blue,ultra thick];
		\draw(3.2,0)--(4,0)[green,ultra thick];
		\draw(4,0)--(5,0)[ultra thick];
		\draw(0,-0.1)--(0,0.1);
		\draw(2,-0.1)--(2,0.1);
		\draw(3.2,-0.1)--(3.2,0.1);
		\draw(4,-0.1)--(4,0.1);
		\draw(5,-0.1)--(5,0.1);
		\draw(1,0.3)node{$I_1$};
		\draw(2.6,0.3)node{$I_2$};
		\draw(3.6,0.3)node{$I_3$};
		\draw(4.5,0.3)node{$I_\emp$};
		\draw(1,-0.3)node{$0.4$};
		\draw(2.6,-0.3)node{$0.24$};
		\draw(3.6,-0.3)node{$0.16$};
		\draw(4.5,-0.3)node{$0.2$};
		\draw(-5,2)--(-7,0)[red,thick];
		\draw(-5,2)--(-5,0)[blue,thick];
		\draw(-5,2)--(-3,0)[green,thick];
		\filldraw[fill=white](-5,2)circle(0.3)node{$i$};
		\filldraw[fill=white](-7,0)circle(0.3)node{$1$};
		\filldraw[fill=white](-5,0)circle(0.3)node{$2$};
		\filldraw[fill=white](-3,0)circle(0.3)node{$3$};
		\draw(-4,2)node{$\lambda_i=1$};
		\draw(-7,1)node[red]{$x_{i1}=0.4$};
		\draw(-5,0.6)node[blue]{$x_{i2}=0.24$};
		\draw(-3,1)node[green]{$x_{i3}=0.16$};
	\end{tikzpicture}
	\caption{Illustration of intervals $I_1,I_2,I_3$ and $I_\emp$ for online type $i\in I$ with neighbors $1,2,3$.}
	\label{fig:correlated-sample}
\end{figure*}
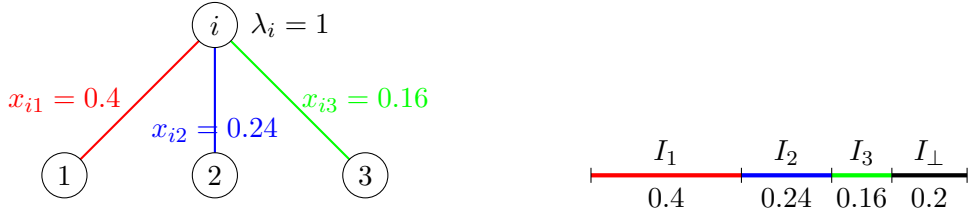



Let $\mu_{jk}(1) = \sum_{i \in I} \lambda_k D_i^1(j, k)$ for any $j, k \in J^*$, and $\mu_j(1) = \sum_{k \in J^*} \mu_{jk}(1)$ denote the arrival rates of online vertices of type $(*, j, k)$ and $(*, j, *)$ w.r.t.\ distributions $D_i^1$'s.

\begin{lemma}\label{lem:basic-sampling-property}
    Wasteful Correlated Sampling satisfies the following properties:
    \begin{enumerate}[itemsep=0pt, topsep=5pt]
        \item For any $j \in J$, $\mu_j(1) = x_j$.
        \item For any $j \in J$, $\sum_{k \in J^* \setminus \set{j}} \mu_{kj}(1)$ is at most $1$.
        \item For any $j \in J$, $\sum_{k \in J^* \setminus \set{j}} \mu_{jk}(1)$ is at least:
            \[
                \kappa(x_j) \defeq
                \begin{cases}
                    -\ln(1 - x_j) - x_j & 0 \le x_j \le \frac{1}{2} ~;
                    \\[1ex]
                    x_j - 1 + \ln 2 & \frac{1}{2} < x_j \le 1 ~.
                \end{cases}
            \]
        \item For any $j \ne k \in J^*$, $\mu_{jk}(1) = \mu_{kj}(1)$.
    \end{enumerate}
\end{lemma}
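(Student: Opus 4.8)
The plan is to reduce all four claims to a single geometric fact about where the sampled point $\nu$ and its antipode $\nu'$ land. Because $\sum_{j \in J^*} x_{ij} = \lambda_i$, the intervals $(I_j)_{j \in J^*}$ partition $[0,\lambda_i)$; and since $\nu$ is uniform and $\nu'$ has the same distribution as $\nu$, we get $\Pr[\nu \in I_j] = \Pr[\nu' \in I_j] = x_{ij}/\lambda_i$ for every $j \in J^*$. Viewing $[0,\lambda_i)$ as a circle of circumference $\lambda_i$, the map $\nu \mapsto \nu'$ is the rotation by $\lambda_i/2$, a measure-preserving involution, so $(\nu,\nu')$ and $(\nu',\nu)$ are identically distributed. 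The one computation I actually carry out is that an arc of length $x_{ij}$ and its rotation by $\lambda_i/2$ overlap in total length $\max\{0,\ 2x_{ij} - \lambda_i\}$, so that
\[
  \Pr\bigl[\nu \in I_j,\ \nu' \notin I_j\bigr] = \frac{x_{ij}}{\lambda_i} - \max\Bigl\{0,\ \frac{2x_{ij}}{\lambda_i} - 1\Bigr\} = \min\Bigl\{\frac{x_{ij}}{\lambda_i},\ 1 - \frac{x_{ij}}{\lambda_i}\Bigr\}.
\]

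Given this, three of the parts follow quickly. Part~1 is immediate: $\mu_j(1) = \sum_{i \in I} \lambda_i \Pr[\nu \in I_j] = \sum_{i \in I} x_{ij} = x_j$. For Part~4, the involution symmetry gives $D_i^1(j,k) = \Pr[\nu \in I_j,\ \nu' \in I_k] = \Pr[\nu \in I_k,\ \nu' \in I_j] = D_i^1(k,j)$ for every $i$, hence $\mu_{jk}(1) = \mu_{kj}(1)$. For Part~2, summing over $k \ne j$ and using that the $I_k$ partition $[0,\lambda_i)$,
\[
  \sum_{k \in J^* \setminus \{j\}} \mu_{kj}(1) = \sum_{i \in I} \lambda_i \Pr\bigl[\nu \notin I_j,\ \nu' \in I_j\bigr] \le \sum_{i \in I} \lambda_i \Pr[\nu' \in I_j] = x_j,
\]
and the second constraint of the natural LP with $S = I$ gives $x_j \le 1 - e^{-\Lambda} \le 1$.

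Part~3 is the crux and the place I expect the real difficulty. By the displayed identity, the quantity to bound below is $\sum_{k \in J^* \setminus \{j\}} \mu_{jk}(1) = \sum_{i \in I} \lambda_i \min\{x_{ij}/\lambda_i,\ 1 - x_{ij}/\lambda_i\}$, i.e.\ the non-wasted sampling rate out of $j$, which is a sum of a \emph{concave} function of the ratios $x_{ij}/\lambda_i$, so Jensen's inequality points the wrong way and must be replaced. The idea is to write $\min\{t,1-t\} = t - \max\{0, 2t-1\}$, so that the sum equals $x_j - \sum_{i \in I} \lambda_i\, h\bigl(x_{ij}/\lambda_i\bigr)$ with $h(t) = \max\{0, 2t-1\}$ convex and $h(0)=0$. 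Since $x_{ij}/\lambda_i \in [0,1]$ (from $\sum_j x_{ij} \le \lambda_i$), the converse of Jensen's inequality, Lemma~\ref{lem:converse-jensen-general}, applies and gives
\[
  \sum_{i \in I} \lambda_i\, h\bigl(x_{ij}/\lambda_i\bigr) \le \int_0^{-\ln(1-x_j)} \max\bigl\{0,\ 2e^{-\lambda} - 1\bigr\}\, d\lambda.
\]
The integrand is supported on $[0,\ln 2]$, so the integral equals $2x_j + \ln(1-x_j)$ when $x_j \le \tfrac{1}{2}$ and $1-\ln 2$ when $x_j > \tfrac{1}{2}$; subtracting from $x_j$ reproduces the two branches of $\kappa(x_j)$ exactly. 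What remains is only the elementary evaluation of those two definite integrals and checking that the threshold $x_j = \tfrac{1}{2}$ matches the case split defining $\kappa$.
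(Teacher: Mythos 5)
Your proposal is correct and follows essentially the same route as the paper: part 1 by direct computation, part 4 by the symmetry of $(\nu,\nu')$, part 2 by bounding the second-entry probability by $x_{ij}/\lambda_i$ and using $\sum_i x_{ij}\le 1$, and part 3 by writing $\min\{t,1-t\}=t-\max\{0,2t-1\}$ and applying the converse of Jensen's inequality (Lemma~\ref{lem:converse-jensen-general}) to the convex part. The only difference is cosmetic — you make the circular-overlap computation $\max\{0,2x_{ij}-\lambda_i\}$ explicit, which the paper states without derivation.
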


\begin{proof}
    \textbf{(Part 1)~}
    By definition, the probability that an online vertex of type $i$ samples $j$ as the first entry is $\frac{x_{ij}}{\lambda_i}$.
    Hence, we get that $\mu_j(1) = \sum_{i \in I} \lambda_i \cdot \frac{x_{ij}}{\lambda_i} = x_j$.
    \\[1ex]
    \textbf{(Part 2)~}
    By changing the order of summation, we get that:
    \[
        \sum_{k \in J^* \setminus \set{j}} \mu_{kj}(1) = \sum_{k \in J^* \setminus \set{j}} \sum_{i \in I} \lambda_i D_i^1(k, j) = \sum_{i \in I} \lambda_i \sum_{k \in J^* \setminus \set{j}}  D_i^1(k, j)
        ~.
    \]
    
    Further, an online vertex of type $i$ samples $j \in J$ as second entry with probability at most $\frac{x_{ij}}{\lambda_i}$;
    it may be smaller because the two entries may be equal in some cases.
    Hence:
    \[
        \sum_{k \in J^* \setminus \set{j}} \mu_{kj}(1) \le \sum_{i \in I} \lambda_i \frac{x_{ij}}{\lambda_i} = \sum_{i \in I} x_{ij} \leq 1
        ~.
    \]
    \textbf{(Part 3)~}
    The probability that an online vertex of type $i$ samples $j \in J$ as the first entry, and further samples a second entry $k \ne j$ equals $\min \{ \frac{x_{ij}}{\lambda_i}, 1 - \frac{x_{ij}}{\lambda_i} \}$, or equivalently, $\frac{x_{ij}}{\lambda_i} - \max \{ 2 \cdot \frac{x_{ij}}{\lambda_i} - 1, 0 \}$.
    Hence, letting $f(x) = \max \{ 2x-1, 0 \}$, we have:
    \[
        \sum_{k\in J^*\setminus\set{j}} \mu_{jk}(1) = \sum_{i \in I} \lambda_i \Big( \frac{x_{ij}}{\lambda_i} - f \Big( \frac{x_{ij}}{\lambda_i} \Big) \Big) = x_j - \sum_{i \in I} \lambda_i f \Big( \frac{x_{ij}}{\lambda_i} \Big)
        ~.
    \]
    
    Further by the converse of Jensen's inequality in Lemma~\ref{lem:converse-jensen-general}, this is at most:
    \[
        x_j-\int_0^{-\ln(1-x_j)} f \big( e^{-\lambda} \big) d \lambda
        =
        \begin{cases}
            - \ln(1 - x_j) - x_j & x_j \le \frac{1}{2} ~;
            \\[1ex]
            x_j - 1 +\ln 2 & x_j>\frac{1}{2} ~.
        \end{cases}
    \]
    \textbf{(Part 4)~}
    It follows by the symmetric joint distribution of $(\nu, \nu')$.
\end{proof}

We now present an analysis of the Pair Sampling algorithm with Wasteful Correlated Sampling that is simpler but weaker than the $0.706$ competitive raito by \citet{JailletL:MOR:2014}.
Nonetheless, we develop in the process some lemmas that are useful in the analysis of the final algorithm.

\begin{theorem}\label{thm:basic-pair-ratio}
    The competitive ratio of Pair Sampling with Wasteful Correlated Sampling in the unweighted case of online stochastic matching is at least:
    \[
        1 - \frac{1}{e} + \frac{1}{e} \big( 1 - \frac{2}{e} \big) \ln 2 > 0.699
        ~.
    \]
\end{theorem}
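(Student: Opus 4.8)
The plan is to lower bound the competitive ratio by lower bounding, for each offline vertex $j$, its probability of being matched (via Lemma~\ref{lem:prob-lower-bound}) in terms of $x_j$, and then to compare the resulting bound against $\natlp = \sum_{j \in J} x_j$ using Theorem~\ref{lem:natural-lp} (so that $\opt \le \natlp$). By Lemma~\ref{lem:monotone-unweighted} the algorithm is monotone, so it suffices to do this in the Poisson arrival model. The ratio will come out as $\inf_{x \in [0,1]} \frac{\Pr[j \text{ matched}]}{x}$, and I expect the worst case to be $x_j \to 1$ or some interior point, giving the stated $1 - \tfrac1e + \tfrac1e(1-\tfrac2e)\ln 2$.

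First I would instantiate Lemma~\ref{lem:prob-lower-bound} with the parameters of Wasteful Correlated Sampling: using Part~1 of Lemma~\ref{lem:basic-sampling-property}, $\mu_j(1) = x_j$; I will discard the $e^{-\mu_{\emp j}}$ factor by bounding $e^{-\mu_{\emp j}} \le 1$ (this is where the analysis is "wasteful" and weaker than Jaillet–Lu). So $j$ is matched with probability at least $1 - e^{-x_j} \cdot \exp\bigl(\sum_{k \neq j} \phi(\mu_k(1), \mu_{kj}(1))\bigr)$. Next I would control the exponent $\sum_{k \neq j}\phi(\mu_k(1),\mu_{kj}(1))$ from above — note $\phi \le 0$, so this exponential factor is $\le 1$, which already recovers $1 - \tfrac1e$; the improvement comes from showing it is bounded away from $1$. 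Using Lemma~\ref{lem:fxy-bound}, $\phi(\mu_k(1),\mu_{kj}(1)) \le (\ln 2 - 1)\,\mu_k(1)\,\mu_{kj}(1) = (\ln 2 - 1)\,x_k\,\mu_{kj}(1)$. Then $\sum_{k \neq j}\phi(\mu_k(1),\mu_{kj}(1)) \le (\ln 2 - 1)\sum_{k \neq j} x_k \mu_{kj}(1)$; here I would want a uniform-in-$x_k$ relaxation, using $x_k \le 1$ only where it helps, or better, exploit that the $\mu_{kj}(1)$ mass summed over $k$ is controlled.

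The key quantitative input is Part~3 of Lemma~\ref{lem:basic-sampling-property}: $\sum_{k \in J^* \setminus \{j\}} \mu_{jk}(1) \ge \kappa(x_j)$, combined with the symmetry $\mu_{jk}(1) = \mu_{kj}(1)$ from Part~4, which lets me relate the incoming mass at $j$ to outgoing masses at the $k$'s. Concretely, $\sum_{k \neq j} \mu_{kj}(1) = \sum_{k \neq j}\mu_{jk}(1) \ge \kappa(x_j)$ (modulo the $k = \emp$ term, which I would handle by noting $\phi$ involves only $k \in J$, so dropping $\mu_{\emp j}$ is the correction already made). The cleanest route: bound $\sum_{k \neq j} x_k \mu_{kj}(1)$ — I would like to replace each $x_k$ by a worst-case constant, but since $x_k$ can be as large as $1$, the bound $\sum_{k\neq j} x_k\mu_{kj}(1) \le \sum_{k \neq j}\mu_{kj}(1) \le 1$ (Part~2) is crude; the right move is instead to symmetrize and use that $\sum_k \mu_{jk}(1)$ relates to $\kappa(x_j)$ at vertex $j$ itself. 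So I would aggregate over all $j$: $\sum_j \Pr[j\text{ matched}] \ge \sum_j \bigl(1 - e^{-x_j}\exp(\cdots)\bigr)$, then plug $\exp(\cdots) \ge $ something, and use $\sum_j x_j = \natlp \ge \opt$, turning the bound into a sum of per-vertex functions of $x_j$ that I minimize pointwise. The main obstacle I anticipate is handling the cross-term $\sum_{k} x_k \mu_{kj}(1)$: the exponent at $j$ depends on other vertices' LP values, so a naive per-vertex argument leaks. I expect the fix is to use $\phi(\mu_k,\mu_{kj}) \le (\ln 2 - 1)\mu_k\mu_{kj}$ together with a global double-counting: $\sum_j \sum_{k\neq j} x_k \mu_{kj}(1) = \sum_k x_k \sum_{j \neq k}\mu_{kj}(1) \ge \sum_k x_k \kappa(x_k)$ by Part~3, so the "loss" is itself controlled by the same $\kappa$ function evaluated at each vertex. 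Putting $\Pr[j\text{ matched}] \ge 1 - e^{-x_j}$ for the baseline and the $(\ln 2 - 1)$-term as the gain, the final step is a single-variable optimization: show $1 - \tfrac1e + \tfrac1e\bigl(1 - \tfrac2e\bigr)\ln 2 \le \frac{1 - e^{-x}(\text{gain factor})}{x}$ for all $x \in [0,1]$, which should reduce to checking convexity/monotonicity of an explicit function of $x$ involving $e^{-x}$ and $\kappa(x)$, with the bottleneck at $x = 1$ (where $\kappa(1) = \ln 2$, producing the $(1 - 2/e)\ln 2$ term). I would verify the inequality is tightest exactly there, completing the bound.
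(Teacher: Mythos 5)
Your overall route is the paper's: apply Lemma~\ref{lem:prob-lower-bound}, linearize $\phi$ via Lemma~\ref{lem:fxy-bound}, amortize by swapping the order of summation so that each vertex $k$ is credited with $x_k\sum_{j}\mu_{kj}(1)$, invoke Parts 3--4 of Lemma~\ref{lem:basic-sampling-property} to replace that sum by $\kappa(x_k)$, and finish with a single-variable minimization (the paper's Lemma~\ref{lem:amortize-monotone}, with the case split at $x_j=\tfrac12$). However, there is one concrete gap: your decision to discard the factor $e^{-\mu_{\emp j}}$ by bounding it by $1$ breaks the step where $\kappa$ enters. Part~3 of Lemma~\ref{lem:basic-sampling-property} lower-bounds $\sum_{k\in J^*\setminus\set{j}}\mu_{jk}(1)\ge\kappa(x_j)$, i.e.\ \emph{including} the mass $\mu_{j\emp}(1)$ sent to the dummy vertex. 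After your double counting you only control $\sum_{k\in J\setminus\set{j}}x_k\mu_{kj}(1)=\sum_k x_k\sum_{j\in J\setminus\set{k}}\mu_{kj}(1)$, and the inner sum excludes $\mu_{k\emp}(1)$; it has no useful lower bound in terms of $x_k$. For example, a single online type $i$ with $\lambda_i=2$ and $x_{ij}=1-e^{-2}$ has $\frac{x_{ij}}{\lambda_i}<\tfrac12$, yet the shifted point $\nu'$ always lands in $I_\emp$ whenever $\nu\in I_j$, so \emph{all} of $j$'s second-option mass goes to $\emp$ and your gain term vanishes while $\kappa(x_j)>0$. Your parenthetical ``dropping $\mu_{\emp j}$ is the correction already made'' has the logic backwards: dropping $e^{-\mu_{\emp j}}$ weakens the matching-probability bound, it does not compensate for the missing $\emp$ term in the amortization.

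The fix is exactly what the paper does: keep the factor $e^{-\mu_{\emp j}(1)}$, use the symmetry $\mu_{\emp j}(1)=\mu_{j\emp}(1)$ (Part~4) to rewrite it, and artificially weaken it to $e^{-(1-\ln 2)\,x_j\mu_{j\emp}(1)}$ so that it has the same form as the $(1-\ln 2)x_k\mu_{kj}(1)$ terms. Then the amortization assigns to each $j$ the full quantity $x_j\sum_{k\in J^*\setminus\set{j}}\mu_{jk}(1)\ge x_j\kappa(x_j)$, and Part~3 applies. Two smaller points: (i) the linearization $1-e^{-(1-\ln 2)z}\ge(1-\tfrac2e)z$ needs $z\le 1$, which is where Part~2 of Lemma~\ref{lem:basic-sampling-property} (the bound you dismissed as crude) is actually used; (ii) the prefactor $\tfrac1e$ on the gain comes from $e^{-x_j}\ge e^{-1}$, which you should state explicitly.
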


\begin{proof}
    We follow the framework of \citet{JailletL:MOR:2014}, except that the Poisson arrival model ensures true independence among online vertices of type $(*, j, *)$ for all $j \in J$, instead of the asymptotic independence in \citet{JailletL:MOR:2014}.
    Since $\mu_j(1) = x_j$ for any $j \in J$ due to Lemma~\ref{lem:basic-sampling-property}, by Lemma~\ref{lem:prob-lower-bound} the expected size of the algorithm's matching is at least:
    \[
        \alg \ge \sum_{j \in J} \Big( 1 - e^{-x_j} \cdot e^{-\mu_{\emp j}(1)} \cdot e^{\sum_{k\in J\setminus\set{j}} \phi(x_k, \mu_{kj}(1))} \Big)
        ~.
    \]

    Further by $\phi(x, y) \le (\ln 2 - 1) xy$ according to Lemma~\ref{lem:fxy-bound}, it is at least:
    \[
        \sum_{j \in J} \Big( 1 - e^{-x_j} \cdot e^{- \mu_{\emp j}(1) - (1 - \ln 2) \sum_{k\in J\setminus\set{j}} x_k \mu_{kj}(1)} \Big)
        ~.
    \]

    We artificially decrease $\mu_{\emp j}(1) = \mu_{j \emp}(1)$ (Lemma~\ref{lem:basic-sampling-property}) to $(1 - \ln 2) x_j \mu_{j \emp}(1)$ to mimic the form of the other terms as a preparation for the amortized argument.
    $\alg$ is then lower bounded by:
    \[
        \sum_{j \in J} \Big( 1 - e^{-x_j} \cdot e^{- (1 - \ln 2) ( x_j \mu_{j\emp}(1) +  \sum_{k\in J\setminus\set{j}} x_k \mu_{kj}(1) )} \Big)
        ~.
    \]

    Splitting each term as $1 - e^{-x_j} + e^{-x_j} \big( 1 - e^{- (1 - \ln 2) ( x_j \mu_{j\emp}(1) +  \sum_{k\in J\setminus\set{j}} x_k \mu_{kj}(1) )} \big)$, we get:
    \[
        \alg \ge \sum_{j \in J} \Big( \underbrace{\vphantom{\Big|} 1 - e^{-x_j}}_{\text{\normalsize (basic)}} + \underbrace{\frac{1}{e} \big( 1 - e^{- (1 - \ln 2) ( x_j \mu_{j\emp}(1) +  \sum_{k\in J\setminus\set{j}} x_k \mu_{kj}(1) )} \big)}_{\text{\normalsize (extra)}} \Big)
        ~.
    \]

    The key step is an amortized analysis that bounds the \emph{extra} part above.
    We state it as a lemma so that it can be used in the analysis the final algorithm.
    Informally, the amortization counts each vertex $j$'s \emph{basic} part, and its contribution to the \emph{extra} part of the other vertices.

\begin{lemma}
    \label{lem:amortize-extra}
    The \emph{extra} part is at least:
    \[
        \frac{1}{e} \big(1 - \frac{2}{e} \big) \sum_{j \in J} x_j \sum_{k \in J^* \setminus \set{j}} \mu_{jk}(1)
        ~.
    \]
\end{lemma}

\begin{proof}[Proof of Lemma~\ref{lem:amortize-extra}]
    By $1 - e^{-cx} \ge (1 - e^{-c}) x$ for any $c \ge 0$ and any $0 \le x \le 1$, this is at least:
    \[
        \frac{1}{e} \big(1 - \frac{2}{e} \big) \sum_{j \in J} \Big( x_j \mu_{j\emp}(1) + \sum_{k \in J \setminus \set{j}} x_k \mu_{kj}(1) \Big)
    \]

    Changing the order of summations in the second term proves the lemma.
\end{proof}

    By Lemma~\ref{lem:amortize-extra} and further by the third property of Lemma~\ref{lem:basic-sampling-property}, we have:
    \[
        \alg \ge \sum_{j \in J} \Big( 1 - e^{-x_j} + \frac{1}{e} \big(1-\frac{2}{e}\big) x_j \kappa(x_j) \Big)
        ~.
    \]

    For $x_j \le \frac{1}{2}$, the \emph{basic} part alone is sufficient because $1 - e^{-x} \ge (1-e^{-\frac{1}{2}}) 2x > 0.786 \cdot x$ for any $0 \le x \le \frac{1}{2}$.
    For $x_j > \frac{1}{2}$, we have $\kappa(x_j) = x_j - 1 + \ln 2$.
    We shall use the next lemma, whose proof is deferred to Appendix~\ref{app:proof-amortize-monotone} since it is simple but tedious calculus.
    
\begin{lemma}
    \label{lem:amortize-monotone}
    The function $\frac{1 - e^{-x}}{x} + \frac{1}{e} \big( 1 - \frac{2}{e} \big) x$ is decreasing in $x \in [\frac{1}{2}, 1]$.
\end{lemma}

    By Lemma~\ref{lem:amortize-monotone}, we have:
    \[
        1 - e^{-x_j} + \frac{1}{e} \big(1-\frac{2}{e}\big) x_j \big( x_j - 1 + \ln 2 \big) \ge \Big( 1 - \frac{1}{e} + \frac{1}{e} \big(1-\frac{2}{e}\big) \ln 2 \Big) x_j > 0.699 \cdot x_j
        ~.
    \]

    Hence, summing the inequalities for all offline vertices $j \in J$ proves the theorem.
\end{proof}

\subsection{Correlated Sampling}

Consider the wasteful case of $D_i^1$ in the previous subsection for some online vertex type $i \in I$, i.e., when there is some offline vertex $j^*$ such that $x_{ij^*} > \frac{1}{2} \lambda_i$.
In this case, Wasteful Correlated Sampling has a simpler and equivalent interpretation:
\begin{enumerate}[itemsep=0pt, topsep=5pt]
    \item Sample $j \in J$ with probability $\frac{x_{ij}}{\lambda_i}$.
    \item If $j \ne j^*$, let $k = j^*$.
    \item If $j = j^*$, sample $k \in J^* \setminus \set{j^*}$ with probability $\frac{x_{ik}}{x_{ij^*}}$, and $k = j^*$ with probability $\frac{\lambda_i - x_{ij^*}}{x_{ij^*}}$.
\end{enumerate}

This subsection considers a variant that is not wasteful by increasing the probability of sampling $k \in J^* \setminus \set{j^*}$ to $\frac{x_{ik}}{\lambda_i - x_{ij^*}}$ in the third step and, as a result, eliminating the case of $k = j^*$.
As intermediate steps in the analysis, we will more generally consider a family of $\beta$-Correlated Sampling algorithms for any $\beta \ge 1$.
Denote the corresponding distributions as $D_i^\beta$ for all online vertex types $i \in I$.
The unwasteful algorithm is the limit case when $\beta \to \infty$, for which case we omit $\beta$ and call it Correlated Sampling.

\begin{definition}[$\beta$-Correlated Sampling]
    \label{def:correlated-sample}
    For any online type $i \in I$, let $D_i^\beta = D_i^1$ if $x_{ij} \le \frac{1}{2} \lambda_i$ for all $j \in J^*$.
    Otherwise, a sample $(j, k)$ from $D_i^\beta$ is generated as follows:
    \begin{enumerate}[itemsep=0pt, topsep=5pt]
        \item Sample $j$ with probability $\frac{x_{ij}}{\lambda_i}$.
        \item If $j \ne j^*$, let $k = j^*$.
        \item If $j = j^*$, sample $k \in J^*$ with probability:
            \[
                \begin{cases}
                    \frac{x_{ik}}{x_{ij^*}} \cdot \min \Big\{ \beta, \frac{x_{ij*}}{\lambda_i - x_{ij^*}} \Big\}
                    &
                    k \ne j^*
                    ~; \\[2ex]
                    \max \Big\{ 1 - \frac{\beta (\lambda_i - x_{ij^*})}{x_{ij^*}} , 0 \Big\}
                    &
                    k = j^*
                    ~.
                \end{cases}
            \]
    \end{enumerate}
\end{definition}


Let $\mu_{jk}(\beta) = \sum_{i \in I} \lambda_k D_i^\beta(j, k)$ for any $j, k \in J^*$, and $\mu_j(\beta) = \sum_{k \in J^*} \mu_{jk}(\beta)$ denote the arrival rates of online vertices of type $(*, j, k)$ and $(*, j, *)$ w.r.t.\ distributions $D_i^\beta$'s.

\begin{lemma}\label{lem:advanced-sampling-property}
    For any $\beta \ge 1$, the following properties hold for $\beta$-correlated sampling.
    \begin{enumerate}
        \item For any $j \in J$, $\mu_j(\beta) = x_j$.
        \item For any $j \in J$, $\sum_{k \in J^* \setminus \set{j}} \mu_{kj}(\beta)$ is at most $\beta$.
        \item For any $j \in J$, $\sum_{k \in J^* \setminus \set{j}} \mu_{jk}(\beta)$ is at least:
            \[
                \kappa(\beta, x_j) \defeq
                \begin{cases}
                    - \beta \big( \ln(1 - x_j) + x_j \big) & 0 \le x_j \le \frac{1}{\beta+1} ~;
                    \\[1ex]
                    x_j - 1 + \beta \ln \frac{\beta+1}{\beta} & \frac{1}{\beta+1} < x_j \le 1 ~.
                \end{cases}
            \]
        \item For any $j \ne k \in J^*$, $\mu_{jk}(\beta) \le \beta \cdot \mu_{jk}(1)$.
        \item For any $j \ne k \in J^*$, $\mu_{jk}(\beta) \le \beta \cdot \mu_{kj}(\beta)$.
    \end{enumerate}
\end{lemma}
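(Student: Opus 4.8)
The plan is to work type by type, observing that $D_i^\beta$ agrees with $D_i^1$ (Wasteful Correlated Sampling) except in the wasteful case, where $D_i^\beta$ only reallocates the mass that $D_i^1$ wastes on the pair $(j^*,j^*)$ onto the pairs $(j^*,k)$ with $k\neq j^*$, multiplying each $D_i^1(j^*,k)$ by $c_i\defeq\min\{\beta,\frac{x_{ij^*}}{\lambda_i-x_{ij^*}}\}$, which lies in $[1,\beta]$ because $x_{ij^*}>\frac12\lambda_i$. Each of the five parts then reduces to a per-type inequality summed over $i\in I$ against $\sum_{i\in I}x_{ij}=x_j\le 1-e^{-\Lambda}<1$ (the second constraint of~\ref{eqn:natural-lp} with $S=I$), exactly in the spirit of Lemma~\ref{lem:basic-sampling-property}.

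For Part~1, the first-entry marginal of $D_i^\beta$ is $\frac{x_{ij}}{\lambda_i}$ for every $j\in J$, just as for $D_i^1$, so $\mu_j(\beta)=\sum_{i\in I}\lambda_i\cdot\frac{x_{ij}}{\lambda_i}=x_j$. For Part~2, I would check that under $D_i^\beta$ the probability that $j$ is sampled as the second entry (with a distinct first entry) is at most $\beta\frac{x_{ij}}{\lambda_i}$: when $D_i^\beta=D_i^1$ it is at most $\frac{x_{ij}}{\lambda_i}$; when $i$ is wasteful with $j=j^*_i$ it equals $\frac{\lambda_i-x_{ij}}{\lambda_i}$, and $x_{ij}>\frac12\lambda_i\ge\frac{\lambda_i}{\beta+1}$ yields the bound; when $i$ is wasteful with $j\neq j^*_i$ it equals $\frac{x_{ij}}{\lambda_i}c_i\le\beta\frac{x_{ij}}{\lambda_i}$. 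Summing over $i$ gives $\sum_{k\in J^*\setminus\{j\}}\mu_{kj}(\beta)\le\beta x_j\le\beta$.

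Part~3 is the crux. For each $i$ I would write the probability that $j$ is sampled first and a distinct vertex second as $\frac{x_{ij}}{\lambda_i}-g_\beta\!\big(\frac{x_{ij}}{\lambda_i}\big)$, where $g_\beta(z)\defeq\max\{(\beta+1)z-\beta,0\}$; this holds in all cases — in the wasteful $j=j^*_i$ case the probability is $z\cdot\min\{c_i,1\}=\min\{\beta(1-z),z\}$ with $z=\frac{x_{ij^*}}{\lambda_i}$, and $z-\min\{\beta(1-z),z\}=\max\{(\beta+1)z-\beta,0\}$, while in the remaining cases $g_\beta$ vanishes on the relevant range $z\le\frac12\le\frac{\beta}{\beta+1}$. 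Since $g_\beta$ is convex with $g_\beta(0)=0$, Lemma~\ref{lem:converse-jensen-general} gives
\[
\sum_{k\in J^*\setminus\{j\}}\mu_{jk}(\beta)=x_j-\sum_{i\in I}\lambda_i\,g_\beta\!\big(\tfrac{x_{ij}}{\lambda_i}\big)\ \ge\ x_j-\int_0^{-\ln(1-x_j)}g_\beta(e^{-\lambda})\,d\lambda.
\]
Since $g_\beta(e^{-\lambda})=(\beta+1)e^{-\lambda}-\beta$ for $\lambda<\ln\frac{\beta+1}{\beta}$ and $0$ afterwards, evaluating the integral and splitting on whether $-\ln(1-x_j)$ is at most or at least $\ln\frac{\beta+1}{\beta}$ — equivalently $x_j\le\frac1{\beta+1}$ or not — yields precisely the two branches of $\kappa(\beta,x_j)$.

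For Parts~4 and~5, I would establish the pointwise sandwich $D_i^1(j,k)\le D_i^\beta(j,k)\le\beta\,D_i^1(j,k)$ for all $j\neq k$ in $J^*$ and all $i$: these are equalities when $D_i^\beta=D_i^1$ and, in the wasteful case, when $j\neq j^*_i$ (both sides equal $\frac{x_{ij}}{\lambda_i}$ if $k=j^*_i$ and $0$ otherwise), whereas for $j=j^*_i$ we have $D_i^\beta(j^*,k)=c_i\,D_i^1(j^*,k)$ with $c_i\in[1,\beta]$. Summing over $i$ gives $\mu_{jk}(1)\le\mu_{jk}(\beta)\le\beta\mu_{jk}(1)$, which is Part~4; Part~5 then follows by chaining $\mu_{jk}(\beta)\le\beta\mu_{jk}(1)=\beta\mu_{kj}(1)\le\beta\mu_{kj}(\beta)$, using the symmetry $\mu_{jk}(1)=\mu_{kj}(1)$ from Part~4 of Lemma~\ref{lem:basic-sampling-property}. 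The main obstacle is Part~3: pinning down that the waste function takes the single closed form $g_\beta(z)=\max\{(\beta+1)z-\beta,0\}$ uniformly across the cases (in particular reconciling the $D_i^\beta=D_i^1$ regime with the wasteful $j=j^*_i$ regime) and then carrying out the integral cleanly; everything else is routine bookkeeping across the two cases in the definition of $D_i^\beta$.
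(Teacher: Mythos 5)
Your proposal is correct and follows essentially the same route as the paper: Part 1 is the identical marginal computation, Part 2 rests on $D_i^\beta(j,k)\le\beta D_i^1(j,k)$, Part 3 applies the converse of Jensen's inequality (Lemma~\ref{lem:converse-jensen-general}) to the same convex waste function $\max\{(\beta+1)z-\beta,0\}$, and Parts 4--5 use the pointwise sandwich plus the symmetry $\mu_{jk}(1)=\mu_{kj}(1)$. (The only blemish is the intermediate expression ``$z\cdot\min\{c_i,1\}$'' in Part 3, which should read $z\cdot\min\{c_i\tfrac{1-z}{z},1\}$; the resulting formula $\min\{\beta(1-z),z\}$ and everything downstream is right.)
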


\begin{proof}
    \textbf{(Part 1)~}
    This is verbatim to the case of $\beta = 1$.
    By definition, the probability of an online vertex of type $i$ samples $j$ as the first entry is $\frac{x_{ij}}{\lambda_i}$.
    Hence, we get that $\mu_j(\beta) = \sum_{i \in I} \lambda_i \cdot \frac{x_{ij}}{\lambda_i} = x_j$.
    \\[1ex]
    \textbf{(Part 2)~}
    It follows by comparing the definitions, as the probabiltiy $D_i^\beta(j k) \le \beta D_i^1(j, k)$, then applying lemma \ref{lem:basic-sampling-property}.\\[1ex]
    \textbf{(Part 3)~}
    By changing the order of summation:
    \[
        \sum_{k \in J^* \setminus \set{j}} \mu_{jk}(\beta)
        =
        \sum_{k \in J^* \setminus \set{j}} \sum_{i \in I} \lambda_i D_i^\beta(j, k)
        =
        \sum_{i \in I} \lambda_i \sum_{k \in J^* \setminus \set{j}} D_i^\beta(j, k)
        ~.
    \]

    An online vertex of type $i$ samples $j \in J$ as first entry and further some $k \ne j$ as the second entry with probability $\min \big\{ \frac{x_{ij}}{\lambda_i}, \beta \big( 1 - \frac{x_{ij}}{\lambda_i} \big) \big\}$, or equivalently, $\frac{x_{ij}}{\lambda_i} - \max \{ (\beta+1) \frac{x_{ij}}{\lambda_i} - \beta, 0 \}$.
    Hence, letting $f(x) = \max \{ (\beta+1)x - \beta, 0 \}$:
    \[
        \sum_{k \in J^* \setminus \set{j}} \mu_{jk}(\beta) = \sum_{i \in I} \lambda_i \Big( \frac{x_{ij}}{\lambda_i} - f \Big( \frac{x_{ij}}{\lambda_i} \Big) \Big) = x_j - \sum_{i \in I} \lambda_i f \Big( \frac{x_{ij}}{\lambda_i} \Big)
        ~.
    \]
    
    Further by Lemma~\ref{lem:converse-jensen-general}, this is at most:
    \[
        x_j-\int_0^{-\ln(1-x_j)} f \big( e^{-\lambda} \big) d \lambda
        =
        \begin{cases}
            - \beta \big( \ln(1 - x_j) + x_j \big) & x_j \le \frac{1}{\beta+1} ~;
            \\[1ex]
            x_j - 1 + \beta \ln \frac{\beta+1}{\beta} & x_j>\frac{1}{\beta+1} ~.
        \end{cases}
    \]
    \textbf{(Part 4)~}
    It follows by $D_i^\beta(j k) \le \beta D_i^1(j, k)$.\\[1ex]
    \textbf{(Part 5)~}
    By the second part we have $\mu_{jk}(1) \le \mu_{jk}(\beta) \le \beta \cdot \mu_{jk}(1)$ for any $j \ne k \in J^*$.
    Hence, this part follows by $\mu_{jk}(1) = \mu_{kj}(1)$ due to Lemma~\ref{lem:basic-sampling-property}.
\end{proof}

With these properties of $\beta$-correlated sampling, we now prove our main theorem.

\begin{theorem}
    Pair Sampling with Correlated Sampling is at least \unweighted-competitive.
\end{theorem}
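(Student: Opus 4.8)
The plan is to instantiate the meta algorithm (Algorithm~\ref{alg:pair-sampling-unweighted}) with the distributions $D_i^\beta$ of $\beta$-Correlated Sampling, analyze it in the Poisson arrival model, and pass to the limit $\beta\to\infty$, which is Correlated Sampling. Since the meta algorithm is monotone in the unweighted case (Lemma~\ref{lem:monotone-unweighted}), Theorems~\ref{thm:opt-compare} and~\ref{thm:alg-compare} transfer the resulting ratio to online stochastic matching up to the negligible $1-O(\Lambda^{-1/2})$ factor. Throughout I would compare $\alg$ with $\natlp=\sum_{j\in J}x_j$, which dominates $\opt$ by Theorem~\ref{lem:natural-lp}, so it suffices to prove $\alg\ge\unweighted\cdot\sum_{j\in J}x_j$, and in fact a per-vertex inequality $g(x_j)\ge\unweighted\cdot x_j$ after the amortization below.

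The starting point copies the proof of Theorem~\ref{thm:basic-pair-ratio}. By Part~1 of Lemma~\ref{lem:advanced-sampling-property}, $\mu_j(\beta)=x_j$, and since $\mu_{kj}(\beta)\le\mu_k(\beta)=x_k\le 1$ the estimate $\phi(x_k,\mu_{kj}(\beta))\le(\ln 2-1)\,x_k\mu_{kj}(\beta)\le 0$ of Lemma~\ref{lem:fxy-bound} applies; plugging these into Lemma~\ref{lem:prob-lower-bound} and peeling off $1-e^{-x_j}$ gives the ``basic plus extra'' decomposition
\[
    \alg \;\ge\; \sum_{j\in J}\Big( \underbrace{\vphantom{\big|}1-e^{-x_j}}_{\text{(basic)}} \;+\; \underbrace{e^{-x_j}\big( 1 - e^{-\mu_{\emp j}(\beta)-(1-\ln 2)\sum_{k\in J\setminus\set{j}}x_k\mu_{kj}(\beta)} \big)}_{\text{(extra)}} \Big).
\]

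Next I would lower bound the extra part by an amortization in the spirit of Lemma~\ref{lem:amortize-extra}: linearize $1-e^{-(1-\ln 2)t}$, fold in the $\emp$ term exactly as in Theorem~\ref{thm:basic-pair-ratio}, and change the order of summation via $\sum_j\sum_{k\in J\setminus\set{j}}x_k\mu_{kj}(\beta)=\sum_j x_j\sum_{k\in J\setminus\set{j}}\mu_{jk}(\beta)$; this charges each offline vertex $j$ with its basic term plus an amount proportional to the total second-option mass it emits, $\sum_{k\in J^*\setminus\set{j}}\mu_{jk}(\beta)\ge\kappa(\beta,x_j)$ by Part~3 of Lemma~\ref{lem:advanced-sampling-property}. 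The crucial new leverage over Theorem~\ref{thm:basic-pair-ratio} is that $\kappa(\beta,x_j)\to x_j$ as $\beta\to\infty$: Correlated Sampling wastes no second option, so the emitted mass is (almost) the full $x_j$, far exceeding the $\kappa(x_j)$ available at $\beta=1$, and this is what pushes the ratio past $0.699$. One is then left with a single-variable inequality of the shape $1-e^{-x}+(\text{a constant})\cdot x\,\kappa(\infty,x)\ge\unweighted\cdot x$ on $[0,1]$: as in Theorem~\ref{thm:basic-pair-ratio} one splits at the threshold below which the basic term alone already exceeds $\unweighted\cdot x$, and above it invokes a monotonicity statement in the style of Lemma~\ref{lem:amortize-monotone}; summing over $j$ finishes.

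The main obstacle is precisely this amortization of the extra part once $\beta=\infty$. In the $\beta=1$ analysis the linear estimate $1-e^{-ct}\ge(1-e^{-c})t$ (with $c=1-\ln 2$) is available exactly because the total second-option mass absorbed by any vertex is bounded, $\sum_{k\in J^*\setminus\set{j}}\mu_{kj}(1)\le 1$ by Part~2 of Lemma~\ref{lem:basic-sampling-property}. At $\beta=\infty$ this fails: Part~2 of Lemma~\ref{lem:advanced-sampling-property} only bounds that quantity by $\beta$, and indeed a single vertex can absorb an unbounded amount of second-option mass when many online types all offer it as their second option. The resolution is to observe that such a vertex is already matched with probability close to $1$, so it should be accounted for through the trivial bound $1\ge\unweighted\cdot x_j$ rather than through the extra part, while the remaining, lightly loaded vertices are treated with a truncated linear estimate whose constant is traded off against the saturation threshold. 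Getting this trade-off tight — together with deciding how much of the favourable factor $e^{-\mu_{\emp j}(\beta)}$ and of the exact (rather than crudely bounded) behaviour of $\phi$ to retain, and whether to optimize over a finite $\beta$ or to pass to the limit — is where the real work lies and is what fixes the constant at $\unweighted$; everything else, namely the elementary estimates on $\phi$ and $\kappa$ and the final single-variable inequality, is routine and in the style of the lemmas already proved above.
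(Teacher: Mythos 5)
You set up the right framework---reduction to the Poisson model via monotonicity, comparison with $\natlp$, the basic-plus-extra decomposition from Lemmas~\ref{lem:prob-lower-bound} and~\ref{lem:fxy-bound}, the per-vertex target $\ge \unweighted\cdot x_j$, and the lower bound $\mu_{j\to}(\beta)\ge\kappa(\beta,x_j)$ from Lemma~\ref{lem:advanced-sampling-property}---and you correctly identify the central obstacle: at $\beta=\infty$ the absorbed mass $\sum_{k}\mu_{kj}(\infty)$ can be unbounded, so the linearization $1-e^{-ct}\ge(1-e^{-c})t$ that drives Lemma~\ref{lem:amortize-extra} is unavailable. But your proposed resolution does not close the argument. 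Writing off heavily loaded absorbers $j$ via the trivial bound $1\ge\unweighted\,x_j$ destroys exactly the credit the amortization needs: a vertex $k$ with $x_k$ close to $1$ has $1-e^{-x_k}<\unweighted\,x_k$ and must recoup the deficit from its contributions to \emph{other} vertices' extra parts; if the vertices receiving $k$'s second-option mass are saturated and their extra parts are discarded, $k$ gets no credit for that mass, and the per-vertex bound fails for $k$. You explicitly defer this trade-off as ``where the real work lies,'' so the decisive step is missing.

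The paper's mechanism is different: a layered, continuous amortization over $\beta\in[1,c]$ with $c=\frac{1}{1-\ln 2}$, rather than a saturation threshold. By monotonicity of $\mu_{jk}(\beta)$ in $\beta$ (and of $\phi$ in its second argument), the $\beta=\infty$ algorithm is lower bounded by the expression evaluated at level $c$; the gain beyond $\beta=1$ (an ``advanced'' part on top of basic and extra) is written as $\int_1^c e^{-(1-\ln 2)(\cdots)}\,d\big(x_j\mu_{j\emp}(\beta)+\sum_{k\ne j}x_k\mu_{kj}(\beta)\big)$, and the exponent at level $\beta$ is bounded by $\beta$ using Parts 2, 4 and 5 of Lemma~\ref{lem:advanced-sampling-property}, so each increment of absorbed mass at level $\beta$ still earns a credit of at least $e^{-(1-\ln 2)\beta}$---decaying in $\beta$ but never zero. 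Changing the order of summation, integrating by parts, and substituting $\mu_{j\to}(\beta)\ge\kappa(\beta,x_j)$ (valid since the coefficients are positive) reduces everything to a one-variable inequality whose integral term is evaluated numerically to exceed $\unweighted\,x_j$ for $x_j>\frac{1}{2}$. Without this device---or an equivalent way of crediting mass sent to heavily loaded vertices---your outline does not reach $\unweighted$.
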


\begin{proof}
    Since $\mu_j(\infty) = x_j$ for any $j \in J$ due to Lemma~\ref{lem:advanced-sampling-property}, by Lemma~\ref{lem:prob-lower-bound} the expected size of the algorithm's matching is at least:
    \[
        \alg \geq \sum_{j \in J} \Big( 1 - e^{-x_j} \cdot e^{-\mu_{\emp j}(\infty)} \cdot e^{\sum_{k \in J \setminus\set{j}} \phi(x_k, \mu_{kj}(\infty))} \Big)
        ~.
    \]

    By the monotonicity of $\mu_{jk}(\beta)$'s in $\beta$, for $c = \frac{1}{1 - \ln 2}$, the above bound is at least:
    \[
        \sum_{j \in J} \Big( 1 - e^{-x_j} \cdot e^{-\mu_{\emp j}(c)} \cdot e^{\sum_{k \in J \setminus\set{j}} \phi(x_k, \mu_{kj}(c))} \Big)
        ~.
    \]

    By $\phi(x, y) \le (\ln 2 - 1) xy$ (Lemma~\ref{lem:fxy-bound}), by $\mu_{\emp j}(c) \ge \frac{1}{c} \mu_{j \emp}(c) = (1 - \ln 2) \mu_{j \emp}(c)$ (Lemma~\ref{lem:advanced-sampling-property}), and by $x_j \le 1$, this is at least:
    \[
        \alg \geq \sum_{j \in J} \Big( 1 - e^{-x_j} \cdot e^{-(1 - \ln 2) ( x_j \mu_{j\emp}(c) + \sum_{k \in J \setminus\set{j}} x_k \mu_{kj}(c))} \Big)
        ~.
    \]

    Splitting the term as $1 - e^{-x_j} + e^{-x_j} \big( 1 - e^{-(1 - \ln 2) ( x_j \mu_{j\emp}(c) + \sum_{k \in J \setminus\set{j}} x_k \mu_{kj}(c))} \big)$ for each offline vertex $j \in J$, by $x_j \le 1$ this is at least:
    \[
        1 - e^{-x_j} + \frac{1}{e} \big( 1 - e^{-(1 - \ln 2) ( x_j \mu_{j\emp}(c) + \sum_{k \in J \setminus\set{j}} x_k \mu_{kj}(c))} \big)
        ~.
    \]

    Further split the second part to get:
    \begin{align*}
        &
        \underbrace{\vphantom{\Big|} 1 - e^{-x_j}}_{\text{\normalsize (basic)}} + \underbrace{\frac{1}{e} \big( 1 - e^{-(1 - \ln 2) ( x_j \mu_{j\emp}(1) + \sum_{k \in J \setminus\set{j}} x_k \mu_{kj}(1))} \big)}_{\text{\normalsize (extra)}}
        \\
        & \quad
        + \underbrace{\frac{1}{e} \big( e^{-(1-\ln 2)( x_j\mu_{j\emp}(1) + \sum_{k\in J\setminus\set{j}} x_k\mu_{kj}(1))} - e^{-(1-\ln 2)( x_j\mu_{j\emp}(c) + \sum_{k\in J\setminus\set{j}} x_k\mu_{kj}(c))} \big)}_{\text{\normalsize (advanced)}}
        ~.
    \end{align*}

    \paragraph{Amortizing the Extra Part.}
    We will use the same amortized analysis in the previous subsection to bound the \emph{extra} part above.
    By Lemma~\ref{lem:amortize-extra}, the \emph{extra} part summing over $j$ is at least:
    \[
        \sum_{j \in J} \frac{1}{e} \big(1 - \frac{2}{e} \big) x_j \sum_{k \in J^* \setminus \set{j}} \mu_{jk}(1)
        ~.
    \]

    To simplify notation, for any $\beta \ge 1$ define:
    \[
        \mu_{j \to}(\beta) \defeq \sum_{k \in J^* \setminus \set{j}} \mu_{jk}(\beta)
        ~.
    \]

    Hence, we rewrite the bound as:
    \[
        \sum_{j \in J} \frac{1}{e} \big(1 - \frac{2}{e} \big) x_j \mu_{j \to}(1)
        ~.
    \]

    \paragraph{Amortizing the Advanced Part.}
    This part, omitting the $\frac{1}{e}$, can be written as:
    \begin{align*}
        &
        - \int_{\beta \in [1, c]} d e^{-(1-\ln 2)( x_j\mu_{j\emp}(\beta) + \sum_{k\in J\setminus\set{j}} x_k\mu_{kj}(\beta))}
        \\
        & \qquad
        =
        (1 - \ln 2) \int_{\beta \in [1, c]} e^{-(1-\ln 2)( x_j\mu_{j\emp}(\beta) + \sum_{k\in J\setminus\set{j}} x_k\mu_{kj}(\beta))} d \Big( x_j\mu_{j\emp}(\beta) + \sum_{k\in J\setminus\set{j}} x_k\mu_{kj}(\beta) \Big)
        ~.
    \end{align*}

    Next we bound the magnitude of the exponent.
    First, by $x_j, x_k \le 1$:
    \[
        x_j\mu_{j\emp}(\beta) + \sum_{k\in J\setminus\set{j}} x_k\mu_{kj}(\beta) \le \mu_{j\emp}(\beta) + \sum_{k\in J\setminus\set{j}} \mu_{kj}(\beta)
        ~.
    \]

    Further by $\mu_{jk}(\beta) \le \beta \mu_{jk}(1)$ and $\mu_{j\emp}(1) = \mu_{\emp j}(1)$, it is at most:
    \[
        \beta \Big( \mu_{\emp j}(\beta) + \sum_{k\in J\setminus\set{j}} \mu_{kj}(\beta) \Big) \le \beta
        ~.
    \]

    Hence, the advanced part is at least:
    \[
        (1-\ln 2)\int_{\beta \in [1, c]} e^{-(1-\ln 2)\beta} d \Big( x_j\mu_{j\emp}(\beta) + \sum_{k\in J\setminus\set{j}} x_k\mu_{kj}(\beta) \Big)
        ~.
    \]

    Summing over $j$ allows us to amortize as follows:
    \begin{align*}
        &
        (1-\ln 2)\int_{\beta \in [1, c]} e^{-(1-\ln 2)\beta} d \Big( \sum_{j\in J} x_j\mu_{j\emp}(\beta) + \sum_{j\in J} \sum_{k\in J\setminus\set{j}} x_k\mu_{kj}(\beta) \Big)
        ~.
        \\
        & \qquad
        = (1-\ln 2) \int_{\beta \in [1, c]} e^{-(1-\ln 2)\beta} d \sum_{j \in J} \sum_{k \in J^* \setminus\set{j}} x_j\mu_{jk}(\beta)
        \\
        & \qquad
        = \sum_{j \in J} (1-\ln 2) x_j \int_{\beta \in [1, c]} e^{-(1-\ln 2)\beta} d \mu_{j \to}(\beta)
        ~.
    \end{align*}

    Integrate by parts, for each $j$ the above equals:
    \[
        (1-\ln 2) x_j \Big(
            e^{-(1-\ln 2)c} \mu_{j \to}(c)
            - \frac{2}{e} \mu_{j \to}(1)
            + \int_{\beta \in [1, e]} \mu_{j \to}(\beta) e^{-(1-\ln 2) \beta} (1-\ln 2) d \beta
        \Big)
        ~.
    \]

    \paragraph{Putting Everything Together.}
    The sum of the lower bounds above for the \emph{extra} and \emph{advanced} parts is:
    \begin{align*}
        &
        \sum_{j \in J} \frac{x_j}{e} \Big( 
            \big( 1 - \ln 2 \big) e^{-(1-\ln 2)c} \mu_{j \to}(c)
            + \Big( 1 - \frac{2}{e} - \frac{2(1-\ln 2)}{e} \Big) \mu_{j \to}(1)
        \\
        & \qquad\qquad
            + \big( 1 - \ln 2 \big) \int_{\beta \in [1, e]} \mu_{j \to}(\beta) e^{-(1-\ln 2) \beta} (1-\ln 2) d \beta
        \Big)
        ~.
    \end{align*}

    Since the coefficients of $\mu_{j \to}(\beta)$ are positive for all $\beta \in [1, c]$, the minimum is achieved when $\mu_{j \to}(\beta) = \kappa(\beta, x_j)$ subject to the fourth property of Lemma~\ref{lem:advanced-sampling-property}.
    Hence we conclude that:
    \[
        \alg \ge \sum_{j \in J} \Big( 1 - e^{-x_j} + \frac{1}{e} \big( 1-\frac{2}{e} \big) x_j \kappa(1, x_j) + \frac{1-\ln 2}{e} x_j \int_{\beta \in [1, c]} e^{-(1-\ln 2)\beta} d \kappa(\beta, x_j) \Big)
        ~.
    \]

    For $x_j \le \frac{1}{2}$, the \emph{basic} part alone is sufficient because $1 - e^{-x} \ge (1-e^{-\frac{1}{2}}) 2x > 0.786 \cdot x$ for any $0 \le x \le \frac{1}{2}$.
    For $x_j > \frac{1}{2}$, we have $\kappa(\beta, x_j) = x_j - 1 + \beta \ln \frac{\beta+1}{\beta}$.
    Hence $j$'s contribution equals: 
    \[
        1 - e^{-x_j} + \frac{1}{e} \big( 1 - \frac{2}{e} \big) x_j \big(x_j - 1 + \ln 2 \big) + \frac{1 - \ln 2}{e} x_j \int_1^c e^{-(1-\ln 2) \beta} \big( \ln \tfrac{\beta+1}{\beta} - \tfrac{1}{\beta+1} \big) d\beta
        ~.
    \]

    By Lemma~\ref{lem:amortize-monotone}, the first three terms sum to at least $\Big( 1 - \frac{1}{e} + \frac{1}{e} \big(1-\frac{2}{e}\big) \ln 2 \Big) x_j$.
    The last integral does not seem to admit a closed-form solution so we calculate it numerically, and the above value is greater than $\unweighted x_j$.
    Hence, summing over all offline vertices $j \in J$ proves the theorem.
    %
\end{proof}

\section{Vertex-weighted Matching}
\label{sec:vertex-weighted}

This section considers the vertex-weighted problem.
Each offline vertex $j \in J$ has a non-negative weight $w_j$.
The objective is to maximize the sum of weights of the matched offline vertices.

\subsection{Failure of Correlated Sampling in Vertex-weighted Matching}

Recall the amortized analysis of correlated sampling in the last section.
It divides the probability that an offline vertex is matched into two parts, \emph{basic} and \emph{extra}. 
Then, it proves for any offline vertex that the sum of its \emph{basic} part and its contribution to the \emph{extra} parts of the other vertices is at least its contribution to the LP objective times the competitive ratio.
In the presence of vertex weights, however, the contribution of an offline vertex to the \emph{extra} parts of the other vertices are scaled by their weights, which could be negligible compared to its own weight.

Why do we need amortization to begin with?
Recall the probability that an offline vertex is matched given by Lemma~\ref{lem:prob-lower-bound}:
\[
    1 - e^{-\mu_j} \cdot e^{-\mu_{\emp j}} \cdot e^{\sum_{k \in J \setminus \set{j}} \phi(\mu_k,\mu_{kj})}
    ~.
\]
Although we can lower bound the total resampling mass $\sum_{k \in J \setminus \{j\}} \mu_{kj}$, their contribution to the above equation could be negligible if they come from many vertices $k$ whose $\mu_k$ are close to $0$.
It would be great if we could replace $e^{\phi(\mu_k, \mu_{kj})}$ in the above equation with $e^{-\beta \cdot \mu_{kj}}$ for some constant $\beta > 0$ by modifying the sampling distributions $D_i$'s appropriately.

\subsection{Amortized Correlated Sampling}
\label{sec:amortize-correlated-sample}

This subsection demonstrates how to obtain the above property for $\beta = 0.299$.
The key observation is that the problematic vertices $k$ with tiny $\mu_k$ satisfy:
\[
    1 - e^{-\mu_k} \approx \mu_k 
    ~.
\]

Therefore, the probability that such a vertex is matched by the algorithm is well above its contribution to the LP times the competitive ratio.
We could afford to drop it in the first option with some probability even if it is not yet matched, and to directly consider the second option.


\begin{definition}[Amortized Correlated Sampling]
    \label{def:amortize-correlated-sample}
    For any online type $i \in I$, define $D_i$ as:
    \begin{enumerate}[itemsep=0pt, topsep=5pt]
        \item Sample $(j,k)$ from $D_i^1$ as defined in Subsection~\ref{correlated-pair-sampling}. 
        \item With probability $\delta(x_j)=\max\set{\frac{\beta - (1-\ln 2) x_j}{1-2(1-\ln 2)x_j},0}$, replaces $j$ with $\emp$.
    \end{enumerate}
\end{definition}

Recall that $\mu_j(1)$'s and $\mu_{jk}(1)$'s are the probability given by $D_i^1$'s in Wasteful Correlated Sampling, i.e., without the second step above that replaces $j$ with $\emp$ with certain probability.
We establish below the properties of $\mu_j$'s and $\mu_{jk}$'s from Amortized Correlated Sampling in relation to their counterparts in Wasteful Correlated Sampling.

\begin{lemma}\label{lem:weighted-sampling-property}
    Amortized Correlated Sampling satisfies the following properties:
	\begin{enumerate}
        \item For any $j \in J$, $\mu_j = (1-\delta(x_j)) \mu_j(1)$;
        \item For any $j \ne k \in J$, $\mu_{jk} = (1-\delta(x_j))\mu_{jk}(1)$;	
        \item For any $k \in J$, $\mu_{\emp k} = \mu_{\emp k}(1) + \sum_{j \ne k} \delta(x_j) \mu_{jk}(1)$;
		
	\end{enumerate}
\end{lemma}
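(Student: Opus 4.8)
The plan is to unwind Definition~\ref{def:amortize-correlated-sample} and track how its two-step procedure transforms the pair-sampling distribution of an online type $i$, then combine with linearity of expectation (summing $\lambda_i D_i(\cdot,\cdot)$ over $i \in I$). The one structural observation that makes everything go through cleanly is that the drop probability $\delta(x_j)$ in Step 2 depends on the sampled first entry $j$ only through $x_j = \sum_i x_{ij}$, which is a fixed number not depending on the online type $i$; hence $\delta(x_j)$ factors out of every summation over $i$.

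For Part 2, fix $j \ne k$ in $J$. In the composed distribution the output pair equals $(j,k)$ exactly when Step 1 samples $(j,k)$ from $D_i^1$ (probability $D_i^1(j,k)$) \emph{and} Step 2 does not replace $j$ by $\emp$ (independent probability $1-\delta(x_j)$). Multiplying by $\lambda_i$, summing over $i \in I$, and pulling out the type-independent factor gives $\mu_{jk} = (1-\delta(x_j)) \sum_{i \in I} \lambda_i D_i^1(j,k) = (1-\delta(x_j))\,\mu_{jk}(1)$. The same argument in fact works verbatim for every $k \in J^*$ (including $k = \emp$), so Part 1 follows by summing over $k \in J^*$: $\mu_j = \sum_{k \in J^*} \mu_{jk} = (1-\delta(x_j)) \sum_{k \in J^*} \mu_{jk}(1) = (1-\delta(x_j))\,\mu_j(1)$.

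For Part 3, fix $k \in J$ and enumerate the ways the composed procedure produces a pair with first entry $\emp$ and second entry $k$: either Step 1 already samples $(\emp,k)$, which after summing $\lambda_i$ over $i$ contributes $\mu_{\emp k}(1)$; or Step 1 samples $(j,k)$ with $j \in J$ and Step 2 replaces $j$ by $\emp$, contributing $\delta(x_j)\,\mu_{jk}(1)$ for each such $j$. Adding these yields the stated identity. The one place that needs care is the ``wasteful'' outcomes of $D_i^1$ in which the two sampled entries coincide (recall $D_i^1$ may sample $j=k$): such a pair $(k,k)$ whose first entry is dropped becomes $(\emp,k)$, so one must check whether this is already subsumed in the $\sum_{j \ne k}$ term or shows up as a separate $\delta(x_k)\mu_{kk}(1)$ summand, and reconcile this with the definition of $\mu_{\emp k}$ as a sum over all second entries. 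This bookkeeping around the diagonal $j=k$ is the only subtlety; once it is pinned down, the three identities are a mechanical consequence of linearity of expectation together with the type-independence of $\delta$.
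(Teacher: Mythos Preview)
The paper states this lemma without proof, treating it as immediate from Definition~\ref{def:amortize-correlated-sample}. Your argument is exactly the intended one: the drop probability $\delta(x_j)$ depends only on the first sampled entry through the fixed number $x_j$ and not on the online type $i$, so it factors out of $\sum_{i}\lambda_i D_i(\cdot,\cdot)$, and Parts~1 and~2 follow at once (your observation that the Part~2 computation works for every $k\in J^*$ and then summing to get Part~1 is clean).

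Your hesitation about the diagonal in Part~3 is well placed, and you should resolve it rather than defer it. If Step~1 draws $(k,k)$ from $D_i^1$ (which does occur whenever some $x_{ik}>\lambda_i/2$) and Step~2 drops the first entry, the output is $(\emp,k)$; this contributes an additional $\delta(x_k)\,\mu_{kk}(1)$ to $\mu_{\emp k}$. Hence the exact identity is
\[
\mu_{\emp k}=\mu_{\emp k}(1)+\sum_{j\in J}\delta(x_j)\,\mu_{jk}(1),
\]
with the sum including $j=k$. The version written in the lemma with $j\ne k$ is therefore only a lower bound in general. This is harmless for the paper: the sole use of Part~3 is in the proof of Lemma~\ref{lem:prob-amortized}, where $\mu_{\emp j}$ appears with a negative sign in an exponent inside an ``at least'' statement, so underestimating $\mu_{\emp j}$ preserves the inequality. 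But you should say this explicitly rather than leave it as ``bookkeeping to be pinned down.''
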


We can now state the modified version of Lemma~\ref{lem:prob-lower-bound}.

\begin{lemma}
    \label{lem:prob-amortized}
    For any offline vertex $j \in J$, Pair Sampling with Amortized Correlated Sampling matches $j$ with probability at least:
    \[
        1 - e^{-(1 - \delta(x_j)) \mu_j(1)} \cdot e^{-\mu_{\emp j}(1)} \cdot e^{- \beta \sum_{k\in J \setminus \set{j}} \mu_{kj}(1)}
        ~.
    \]
\end{lemma}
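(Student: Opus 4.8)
The plan is to invoke Lemma~\ref{lem:prob-lower-bound} for the meta algorithm run with the distributions $D_i$ of Amortized Correlated Sampling, and then translate every $\mu$-quantity appearing there into its Wasteful Correlated Sampling counterpart via Lemma~\ref{lem:weighted-sampling-property}. Concretely, Lemma~\ref{lem:prob-lower-bound} says $j$ is matched with probability at least $1 - e^{-\mu_j} e^{-\mu_{\emp j}} e^{\sum_{k \in J \setminus \set{j}} \phi(\mu_k, \mu_{kj})}$. Using Lemma~\ref{lem:weighted-sampling-property} together with $\mu_k(1) = x_k$ (Lemma~\ref{lem:basic-sampling-property}), substitute $\mu_j = (1 - \delta(x_j)) \mu_j(1)$, $\mu_{\emp j} = \mu_{\emp j}(1) + \sum_{k \neq j} \delta(x_k) \mu_{kj}(1)$, $\mu_k = (1 - \delta(x_k)) x_k$ and $\mu_{kj} = (1 - \delta(x_k)) \mu_{kj}(1)$. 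This rewrites the lower bound as
\[
    1 - e^{-(1 - \delta(x_j)) \mu_j(1)} \cdot e^{-\mu_{\emp j}(1)} \cdot \prod_{k \in J \setminus \set{j}} \exp\Big( \phi\big( (1-\delta(x_k)) x_k,\ (1-\delta(x_k)) \mu_{kj}(1) \big) - \delta(x_k) \mu_{kj}(1) \Big) .
\]

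Comparing with the claimed bound, it suffices to show that the product above is at most $\prod_k e^{-\beta \mu_{kj}(1)}$, and for this I would establish the stronger term-by-term inequality: for every $k \in J \setminus \set{j}$,
\[
    \phi\big( (1-\delta(x_k)) x_k,\ (1-\delta(x_k)) \mu_{kj}(1) \big) - \delta(x_k) \mu_{kj}(1) \ \le\ -\beta\, \mu_{kj}(1) .
\]
Both arguments of $\phi$ lie in $[0,1]$: $\delta(x_k) \in [0,1)$ and $x_k \le 1$, while $\mu_{kj}(1) \le \sum_{k' \in J^* \setminus \set{j}} \mu_{k'j}(1) \le 1$ by the second part of Lemma~\ref{lem:basic-sampling-property}. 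So Lemma~\ref{lem:fxy-bound} bounds the $\phi$ term by $(\ln 2 - 1)(1-\delta(x_k))^2 x_k \mu_{kj}(1)$; the case $\mu_{kj}(1) = 0$ is trivial, and otherwise dividing through by $\mu_{kj}(1) > 0$ reduces the whole thing to the one-variable claim
\[
    (1 - \ln 2)\,(1 - \delta(x))^2\, x\ +\ \delta(x)\ \ge\ \beta , \qquad x \in [0,1] ,
\]
where $\delta(x) = \max\set{\frac{\beta - (1-\ln 2)x}{1 - 2(1-\ln 2)x}, 0}$ and the denominator is positive since $1 - \ln 2 < \frac12$. When $\delta(x) = 0$ this is exactly the defining condition $\beta \le (1-\ln 2)x$ of that branch; when $\delta(x) > 0$, substituting the explicit formula and clearing the positive denominator $(1 - 2(1-\ln 2)x)^2$ turns the inequality, after collecting terms, into $(\beta - (1-\ln 2)x)^2 \ge 0$.

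The only step that is not pure bookkeeping is this last one: the drop probability $\delta(x)$ is reverse-engineered precisely so that the quadratic upper bound $\phi(u,v) \le (\ln 2 - 1) uv$ of Lemma~\ref{lem:fxy-bound}, which a priori looks lossy, is in fact exactly tight for the argument — the resulting one-variable inequality degenerates to a perfect square, with equality at $x = \beta/(1-\ln 2)$. I expect verifying that algebraic collapse, and checking that $\delta(x) \in [0,1)$ throughout $[0,1]$ for $\beta = 0.299$ so that all the substitutions above are legitimate, to be the bulk of the write-up; everything else follows from Lemma~\ref{lem:prob-lower-bound}, Lemma~\ref{lem:weighted-sampling-property}, and the elementary fact that a product of exponentials is dominated term by term.
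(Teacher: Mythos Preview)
Your proposal is correct and follows essentially the same route as the paper: invoke Lemma~\ref{lem:prob-lower-bound}, substitute via Lemma~\ref{lem:weighted-sampling-property}, reduce to a term-by-term inequality, and apply Lemma~\ref{lem:fxy-bound}. The only difference is in the last algebraic step: the paper further relaxes $(1-\delta)^2 \ge 1-2\delta$, after which the one-variable inequality becomes linear in $\delta$ and holds with equality by the very definition of $\delta(x)$, whereas you retain the quadratic $(1-\delta)^2$ and verify directly that the resulting expression collapses to the nonnegative quantity $(1-\ln 2)\,x\,(\beta-(1-\ln 2)x)^2/(1-2(1-\ln 2)x)^2$.
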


\begin{proof}
    By Lemma~\ref{lem:prob-lower-bound} and Lemma~\ref{lem:weighted-sampling-property}, the probability in the lemma equals:
    \[
        1 - e^{-(1-\delta(x_j)) \mu_j(1)} \cdot e^{-\mu_{\emp j}(1) - \sum_{k \in J \setminus \set{j}} \delta(x_k) \mu_{kj}(1) } \cdot e^{\sum_{k \in J \setminus \set{j}} \phi((1-\delta(x_k))\mu_k, (1-\delta(x_k))\mu_{kj})}
    \]

    Comparing to the equation in the lemma, it remains to show that for any $k \ne J \setminus \set{j}$:
    \[
        - \delta(x_k) \mu_{kj}(1) + f \big( (1-\delta(x_k)) \mu_k(1), (1-\delta(x_k)) \mu_{kj}(1) \big) \le - \beta \mu_{kj}(1)
        ~.
    \]

    By Lemma \ref{lem:fxy-bound}, the second term above is bounded by:
    \begin{align*}
        f \big( (1-\delta(x_k))\mu_k(1), (1-\delta(x_k)) \mu_{kj}(1) \big)
        &
        \leq -(1-\ln 2)(1-\delta(x_k))^2 \mu_k(1) \mu_{kj}(1)
        \\
        &
        \leq -(1-\ln 2)(1-2\delta(x_k)) \mu_k(1) \mu_{kj}(1)
        ~.
    \end{align*}

    Hence, it reduces to:
    \[
        \delta(x_k) + (1-\ln 2) (1-2\delta(x_k)) \mu_k(1) \ge \beta
        ~.
    \]

    Recall that $\mu_k(1) = x_k$.
    The choice of $\delta(x)$ ensures the inequality $(1-\ln 2)(1-2\delta(x))x+\delta(x) \geq \beta$ for any $0 \leq x \leq 1$.
\end{proof}

\begin{theorem}
    \label{thm:vertex-weighted}
	Pair Sampling with Amortized Correlated Sampling is at least 0.7009-competitive.
\end{theorem}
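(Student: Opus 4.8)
The plan is to lower bound $\alg \ge \sum_{j \in J} w_j \cdot \Pr[\,j\text{ is matched}\,]$ and compare it with $\natlp = \sum_{j \in J} w_j x_j$, then invoke $\opt \le \natlp$ (Theorem~\ref{lem:natural-lp}) and the reductions in Theorems~\ref{thm:opt-compare} and~\ref{thm:alg-compare} to transfer the guarantee from the Poisson arrival model to online stochastic matching. Because the weights $w_j$ are arbitrary and nonnegative, the target $\alg \ge \vertexweighted \cdot \natlp$ reduces to a \emph{per-vertex} statement: for every $j \in J$, the probability that $j$ is matched is at least $\vertexweighted \cdot x_j$. This is exactly where the algorithmic amortization pays off --- no cross-vertex charging is needed.

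For the per-vertex bound I would start from Lemma~\ref{lem:prob-amortized} with $\beta = 0.299$, which says $j$ is matched with probability at least $1 - e^{-(1-\delta(x_j))\mu_j(1) - \mu_{\emp j}(1) - \beta\sum_{k \in J \setminus \set{j}}\mu_{kj}(1)}$. Plugging in $\mu_j(1) = x_j$ (Part~1 of Lemma~\ref{lem:basic-sampling-property}) and using the symmetry $\mu_{\emp j}(1) = \mu_{j\emp}(1)$ and $\mu_{kj}(1) = \mu_{jk}(1)$ (Part~4) together with $\beta \le 1$, the exponent is at most
\[
    -(1-\delta(x_j))\,x_j - \beta\Big(\mu_{j\emp}(1) + \sum_{k \in J \setminus \set{j}}\mu_{jk}(1)\Big) \;=\; -(1-\delta(x_j))\,x_j - \beta\sum_{k \in J^* \setminus \set{j}}\mu_{jk}(1),
\]
which by Part~3 of Lemma~\ref{lem:basic-sampling-property} is at most $-(1-\delta(x_j))\,x_j - \beta\,\kappa(x_j)$; moreover this weakening is essentially lossless, since for a fixed total the worst case places the mass on the $\beta$-weighted terms and none on $\mu_{j\emp}$. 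So it remains to verify the one-variable inequality
\[
    1 - \exp\!\big(-(1-\delta(x))\,x - \beta\,\kappa(x)\big) \;\ge\; \vertexweighted \cdot x \qquad \text{for all } x \in [0,1].
\]

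Proving this scalar inequality is the step I expect to be the main obstacle. It must be treated piecewise, since $\kappa$ changes form at $x = \tfrac12$ and $\delta(x) = \frac{\beta - (1-\ln 2)x}{1 - 2(1-\ln 2)x}$ is positive only for $x < \frac{\beta}{1-\ln 2} \approx 0.97$ and is clipped to $0$ beyond that. I would split $[0,1]$ accordingly and, on each piece, analyze the ratio $x \mapsto \big(1 - e^{-(1-\delta(x))x - \beta\kappa(x)}\big)/x$: using the convexity of $\kappa$ (with $\kappa(0)=0$, cf.\ Lemma~\ref{lem:converse-jensen-general}) and the monotonicity of $\delta$ one argues via the sign of the derivative that this ratio attains no interior value below $\vertexweighted$, and checks the boundary values numerically. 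The delicate point is near $x = 0$: there $\delta(0) = \beta$ forces the ratio to tend to $1 - \beta = 0.701$, so the claimed bound $\vertexweighted = 0.7009$ is tight to within $10^{-4}$ and the expansion near $0$ must keep its second-order term. The computation is elementary but fiddly, in the spirit of Lemma~\ref{lem:amortize-monotone} and the numerical integral closing Section~\ref{sec:unweighted}.

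Finally, I would record that Amortized Correlated Sampling is $O(1)$-approximately monotone --- the expected per-round gain is governed by the number of still-unmatched sampled neighbors, whose distribution is time-invariant, and vertex weights distort this by only an $O(1)$ factor --- so that Theorems~\ref{thm:opt-compare} and~\ref{thm:alg-compare} upgrade the Poisson-model bound to a competitive ratio of $\vertexweighted$ in online stochastic matching, up to the usual $1 - O(\Lambda^{-\frac{1}{2}})$ factor that is negligible for sufficiently large instances.
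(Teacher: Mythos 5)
Your proposal matches the paper's proof essentially step for step: reduce to the per-vertex bound $\Pr[j \text{ matched}] \ge \vertexweighted\, x_j$, apply Lemma~\ref{lem:prob-amortized}, absorb $\mu_{\emp j}(1)$ into the $\beta$-weighted sum using $\beta \le 1$, invoke Parts 1, 3, and 4 of Lemma~\ref{lem:basic-sampling-property} to reach $1 - e^{-(1-\delta(x_j))x_j - \beta\kappa(x_j)}$, and finish with the scalar inequality (which the paper simply verifies numerically, exactly as you anticipate, rather than by the piecewise analytic argument you sketch). Your observations about the near-tightness at $x \to 0$ (where the ratio tends to $1-\beta = 0.701$) and the separate approximate-monotonicity step are correct and consistent with the paper's Lemma~\ref{lem:monotone-weighted}.
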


\begin{proof}
    It suffices to show that for any offline vertex $j$, the algorithm matches it with probability at least $0.7009 x_j$.
    By Lemma~\ref{lem:prob-amortized}, $j$ is matched with probability:
    \[
        1 - e^{-(1 - \delta(x_j)) \mu_j(1)} \cdot e^{-\mu_{\emp j}(1)} \cdot e^{- \beta \sum_{k\in J \setminus \set{j}} \mu_{kj}(1)}
        \ge 
        1 - e^{-(1 - \delta(x_j)) \mu_j(1)} \cdot e^{- \beta \sum_{k \in J^* \setminus \set{j}} \mu_{kj}(1)}
        ~.
    \]

    By Lemma~\ref{lem:basic-sampling-property}, we have $\mu_j(1) = x_j$ and $\sum_{k \in J^* \setminus \set{j}} \mu_{kj}(1) \ge \kappa(x_j)$.
    Hence, this is at least:
    \[
		1 - e^{-(1-\delta(x_j)) x_j - \beta \kappa(x_j)}
        ~.
    \]
	
    We numerically verified that $1-e^{-(1-\delta(x))x-\beta \kappa(x)}\geq 0.7009x$ for any $0\leq x\leq 1$.
\end{proof}

\subsection{Approximate Monotonicity}

Unlike the unweighted case, we can only prove an approximate monotonicity of the vertex-weighted matching algorithm. By Theorems \ref{thm:opt-compare} and \ref{thm:alg-compare}, the competitive ratio in the Poisson arrival model also holds in online stochastic matching, up to a $1 - O(\alpha\Lambda^{-\frac{1}{2}})$ factor for a constant $\alpha$.

\begin{lemma}
	\label{lem:monotone-weighted}
	Pair Sampling with Amortized Correlated Sampling is $O(1)$-approximately monotone in the vertex-weighted case of online stochastic matching.
\end{lemma}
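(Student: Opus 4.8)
The plan is to compare $\alg_n$ and $\alg_\ell$ for $n>\ell$ by sandwiching both between quantities of the same order. Two facts make this possible. First, the per-round behavior of the algorithm is time-invariant: the sampling distributions $D_i$ (and hence the probability that a fixed offline vertex $j$ is sampled as the first, resp.\ second, entry in a given round) do not depend on the round index. Second, if $q_j^{(n)}$ denotes the probability that $j$ is still unmatched at the start of round $n$, then $q_j^{(n)}$ is non-increasing in $n$, since a matched vertex stays matched. Moreover round $n$'s sampled pair is independent of the configuration produced by the first $n-1$ rounds.

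Concretely, I would write $\alg_n=\frac1\Lambda\,\E\big[\text{weight gained in round }n\big]$ and note that, conditioned on the sampled pair being $(j,k)$, the weight gained is at most $w_j\,\mathbf 1[\,j\neq\emp,\ j\text{ unmatched}\,]+w_k\,\mathbf 1[\,k\text{ unmatched}\,]$ (with $\emp$ always counted as matched). Taking expectations over the pair, which is independent of the current configuration, gives
\[
    \Lambda\cdot\alg_n\ \le\ \sum_{j\in J} w_j\,q_j^{(n)}\,\mu_j\ +\ \sum_{k\in J} w_k\,q_k^{(n)}\,\tilde\mu_k\,,
\]
where $\mu_j=\sum_{i\in I}\lambda_i D_i(j,*)$ is the first-entry rate under Amortized Correlated Sampling and $\tilde\mu_k=\sum_{i\in I}\lambda_i D_i(*,k)=\sum_{j\in J^*}\mu_{jk}$ is its second-entry rate. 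Dually, keeping only the first-entry term, $\Lambda\cdot\alg_\ell\ \ge\ \sum_{j\in J} w_j\,q_j^{(\ell)}\,\mu_j$.

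Next I would substitute the structural identities. By Lemma~\ref{lem:weighted-sampling-property} and Lemma~\ref{lem:basic-sampling-property}, $\mu_j=(1-\delta(x_j))\,\mu_j(1)=(1-\delta(x_j))\,x_j$, while $\tilde\mu_k\le\sum_{j\in J^*}\mu_{jk}(1)=\mu_k(1)=x_k$ (the last step using $\mu_{jk}(1)=\mu_{kj}(1)$). One checks that $\delta(x)\le\beta$ for all $x\in[0,1]$ — indeed $\delta$ is non-increasing with $\delta(0)=\beta$ — so that $(1-\beta)\,x_j\le\mu_j\le x_j$. Combining this with $q_j^{(n)}\le q_j^{(\ell)}$ for $n>\ell$, the upper bound becomes $\Lambda\cdot\alg_n\le 2\sum_{j\in J} w_j\,q_j^{(\ell)}\,x_j$ and the lower bound becomes $\Lambda\cdot\alg_\ell\ge(1-\beta)\sum_{j\in J} w_j\,q_j^{(\ell)}\,x_j$, hence $\alg_n\le\frac{2}{1-\beta}\,\alg_\ell$; that is, the algorithm is $\frac{2}{1-\beta}$-approximately monotone, which is $O(1)$.

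The one step that needs care is the comparison between the second-entry rate $\tilde\mu_k$ and the first-entry rate $\mu_k$: a priori the second-entry contribution to $\alg_n$ could dwarf the first-entry contribution to $\alg_\ell$, and then this argument would collapse. What saves it is that the symmetric construction of (Wasteful) Correlated Sampling makes the first- and second-entry marginals of every offline vertex equal, and the $\delta$-thinning in Amortized Correlated Sampling only shrinks first entries by a factor $1-\delta(x_j)$ that is bounded away from $0$. The remaining points are routine: treating the dummy vertex $\emp$ correctly (it is always matched, so it contributes $0$ to both bounds) and invoking the independence of round $n$'s sampled pair from the earlier history when passing to expectations.
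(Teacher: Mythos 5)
Your proposal is correct and follows essentially the same route as the paper: sandwich $\alg_n$ between the per-round ``sampled as first entry'' and ``sampled as either entry'' rates multiplied by the (monotonically non-increasing) probabilities of offline vertices being unmatched, then bound the ratio of these rates by $\frac{2}{1-\delta(x_j)}\le\frac{2}{1-\beta}$ using the symmetry of the two entries in $D_i^1$ and the fact that $\delta(x)\le\beta$. The only difference is cosmetic — you compare the weighted sums term by term where the paper passes through $\max_j a_j/b_j$ — so the two arguments are interchangeable.
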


\begin{proof}
	For any offline vertex $j\in J$, let $P_j(n)$ be the probability that $j$ is unmatched at the arrival of $n$-th online vertex.
	By definition $P_j(n)$ is non-increasing over $n$.
	
	Let $a_j$ be the probability that $j$ is at least one of the two options in one round.
	Let $b_j$ be the probability that $j$ is the first choice in one round.
	Since the distribution of online types and the sampling distributions $D_i$'s are time invariant, $a_j, b_j$ are constant throughout the process. Hence the probability that $j$ is matched exactly by $n$-th online vertex is upper bounded by $a_j P_j(n)$, and lower bounded by $b_j P_j(n)$.
	Recall that for any $n \ge 1$, $\alg_n$ denote the expected weight that the algorithm gets from matching the $n$-th online vertex.
	We have:
	\begin{equation*}
		\sum_{j\in J}w_j b_j P_j(n) \leq \alg_n \leq \sum_{j \in J} w_j a_j P_j(n).
	\end{equation*}
	
	Further using the monotonicity of $P_j(n)$, for any $\ell < n$:
	\begin{equation*}
		\frac{\alg_n}{\alg\ell} \leq \frac{\sum_{j\in J}w_ja_jP_j(n)}{\sum_{j\in J}w_jb_jP_j(\ell)} \leq \max_{j\in J}\frac{w_ja_jP_j(n)}{w_jb_jP_j(\ell)} \leq \max_{j\in J}\frac{a_j}{b_j}.
	\end{equation*}
    To give an upper bound of $\frac{a_j}{b_j}$, note that in $D_i^1$ defined in Subsection~\ref{correlated-pair-sampling}, the two options are equally distributed. In amortized correlated sampling, the first choice is dropped with probability $\delta(x_j)$. Therefore, $\frac{a_j}{b_j}\leq \frac{2}{1-\delta(x_j)}\leq\frac{2}{1-\beta}$,
	which is a constant (recall that $\beta = 0.299$).
	Therefore, the algorithm is $O(1)$-approximately monotone.
\end{proof}

\bibliographystyle{plainnat}
\bibliography{matching}

\appendix
\section{Comparisons with Existing Linear Programs}
\label{app:lp}

This section compares the natural LP with those in the previous works, and show that the previosu LP are all relaxations of the natural LP. We first restate the natural LP.
\begin{equation}
	\tag{\natlp}
	\begin{aligned}
		\text{maximize} \quad & \sum_{i \in I} \sum_{j \in J} w_{ij} x_{ij} \\
		\text{subject to} \quad &
		\sum_{j \in J} x_{ij} \le \lambda_i && \forall i \in I \\
		&
		\sum_{i \in S} x_{ij} \le 1 - \exp \Big( - \sum_{i \in S} \lambda_i \Big) && \forall j \in J, \forall S \subseteq I \\
		& x_{ij} \ge 0 && \forall i \in I, \forall j \in J
	\end{aligned}
\end{equation} 

\subsection{General Arrival Rates}

\paragraph{Jaillet-Lu Linear Program.}
\citet{JailletL:MOR:2014} considered the following LP for the unweighted matching with general arrival rates.
Let $\jllp$ denote its optimal value.
\begin{equation}
	\label{eqn:jaillet-lu-lp}
	\tag{\jllp}
	\begin{aligned}
		\text{maximize} \quad & \sum_{(i, j) \in E} x_{ij} \\
		\text{subject to} \quad
		& \sum_{j \in J} x_{ij} \le \lambda_i && \forall i \in I \\
        & \sum_{i \in I} x_{ij} \le 1 && \forall j \in J \\
		& \sum_{i \in I} (2x_{ij} - \lambda_i)^+ \le 1 - \ln 2 && \forall j \in J \\
		& x_{ij} \ge 0 && \forall i \in I, \forall j \in J
	\end{aligned}
\end{equation}

The main difference between this Jaillet-Lu LP and ours is the third constraint, due to \citet*{ManshadiOS:MOR:2012}. We now show that the constraints of the natural LP imply this constraint, therefore the natural LP is a better upper bound of \opt.

\begin{lemma}
	\label{lem:jaillet-lu-lp}
	$\natlp \le \jllp$.
\end{lemma}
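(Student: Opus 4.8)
The plan is to show that any feasible solution $(x_{ij})$ of the natural LP is also feasible for \ref{eqn:jaillet-lu-lp}, so that the maximum value over the natural LP's feasible region is at most that over \ref{eqn:jaillet-lu-lp}'s. Since both LPs have the same unweighted objective $\sum_{(i,j)\in E} x_{ij}$, it suffices to check the three \ref{eqn:jaillet-lu-lp} constraints one at a time. The first constraint $\sum_{j} x_{ij}\le \lambda_i$ is literally a constraint of the natural LP, so there is nothing to prove. The second constraint $\sum_{i\in I} x_{ij}\le 1$ is the special case $S = I$ of the natural LP's second constraint, since $1 - \exp(-\sum_{i\in I}\lambda_i) \le 1$. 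So the only real work is the third (Manshadi–Ostrovsky–Saberi) constraint $\sum_{i\in I}(2x_{ij}-\lambda_i)^+ \le 1-\ln 2$.

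For that constraint I would fix an offline vertex $j$ and apply Lemma~\ref{lem:converse-jensen-general} with the convex function $f(x) = (2x-1)^+ = \max\{2x-1,0\}$, which indeed satisfies $f(0)=0$. The left-hand side $\sum_{i\in I}(2x_{ij}-\lambda_i)^+$ equals $\sum_{i\in I}\lambda_i\, f\!\big(\tfrac{x_{ij}}{\lambda_i}\big)$, exactly the quantity bounded in that lemma. Hence it is at most $\int_0^{-\ln(1-x_j)} f(e^{-\lambda})\,d\lambda = \int_0^{-\ln(1-x_j)} (2e^{-\lambda}-1)^+\,d\lambda$. The integrand is positive precisely when $e^{-\lambda} > \tfrac12$, i.e. $\lambda < \ln 2$; so if $x_j \le \tfrac12$ the upper limit $-\ln(1-x_j)\le \ln 2$ and the integral is $\int_0^{-\ln(1-x_j)}(2e^{-\lambda}-1)\,d\lambda = -2\ln(1-x_j)\cdot\big(\text{…}\big)$ — a quantity one checks is at most $1-\ln 2$ — while if $x_j>\tfrac12$ the contribution beyond $\lambda=\ln 2$ vanishes, so the integral equals $\int_0^{\ln 2}(2e^{-\lambda}-1)\,d\lambda = 2(1 - \tfrac12) - \ln 2 = 1-\ln 2$, meeting the bound with equality. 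Either way $\sum_{i\in I}(2x_{ij}-\lambda_i)^+ \le 1-\ln 2$, which is the desired constraint. (This is exactly the computation already carried out in Part 3 of Lemma~\ref{lem:basic-sampling-property}, with $\kappa(x_j)$; there $x_j - \text{(that integral)}$ was shown to equal $-\ln(1-x_j)-x_j$ for $x_j\le\frac12$ and $x_j-1+\ln 2$ for $x_j>\frac12$, so the integral itself is $x_j+\ln(1-x_j)$ in the first case and $1-\ln 2$ in the second, both $\le 1-\ln 2$.)

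Having verified all three constraints plus nonnegativity, every feasible point of the natural LP is feasible for \ref{eqn:jaillet-lu-lp}, hence $\natlp\le\jllp$. I do not anticipate a genuine obstacle here; the only mildly delicate point is confirming that $f(x)=(2x-1)^+$ is admissible for Lemma~\ref{lem:converse-jensen-general} — it is convex and $f(0)=0$, though it is only piecewise differentiable, so strictly one should note the lemma's proof goes through for convex $f$ via $f''$ interpreted as a measure (or approximate $f$ by smooth convex functions), and that the case $x_j\le\frac12$ bound $x_j+\ln(1-x_j)\le 1-\ln2$ holds since the left side is negative. Everything else is bookkeeping.
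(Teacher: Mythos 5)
Your proof is correct and follows essentially the same route as the paper's: verify feasibility constraint by constraint, with the only substantive step being Lemma~\ref{lem:converse-jensen-general} applied to $f(x)=(2x-1)^+$ for the Manshadi--Ostrovsky--Saberi constraint. The paper avoids your case split by bounding the integral directly by $\int_0^\infty f(e^{-\lambda})\,d\lambda = 1-\ln 2$ (the integrand vanishes past $\lambda=\ln 2$); note also a small slip in your parenthetical --- the first-case integral is $2x_j+\ln(1-x_j)$, not $x_j+\ln(1-x_j)$ --- though it is still at most $1-\ln 2$, so the conclusion stands.
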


\begin{proof}
    We will prove that any feasible solution of the natural LP is also feasible for the Jaillet-Lu LP.
    The first constraint is in both LPs.
    The second constraint holds in the natural LP since $\sum_{i\in I}x_{ij}\leq 1-\exp\left(\sum_{i \in I}\lambda_i\right)\leq 1.$ 
    The third constraint follows by the converse of Jensen's inequality in Lemma~\ref{lem:converse-jensen-general}.
    Let $f(x) = \max \{2x-1, 0\}$:
    \[
        \sum_{i \in I} (2 x_{ij} - \lambda_i)^+ = \sum_{i \in I} \lambda_i f \Big( \frac{x_{ij}}{\lambda_i} \Big)
        \le
        \int_0^\infty f(e^{-\lambda}) d\lambda
        =
        1 - \ln 2
        ~.
    \]
\end{proof}

\subsection{Integral Arrival Rates}

\paragraph{Jaillet-Lu Linear Program.}
For the special case of integral arrival rates, i.e., when $\lambda_i = 1$ for all online types $i \in I$, \citet{JailletL:MOR:2014} considered a different LP.
\begin{align*}
    \text{maximize} \quad & \sum_{(i, j) \in E} x_{ij} \\
    \text{subject to} \quad
    & \sum_{j \in J} x_{ij} \le 1 && \forall i \in I \\
    & \sum_{i \in I} x_{ij} \le 1 && \forall j \in J \\
    & 0 \le x_{ij} \le \frac{2}{3} && \forall i \in I, \forall j \in J
\end{align*}

This is a relaxation of the natural LP, keeping only a subset of the second constraint, either at the limit with $S = I$ and $1$ on the right-hand-side, or when $S = \{ i \}$ is a singleton set, i.e.:
\[
    \sum_{i \in I} x_{ij} \le 1 - \frac{1}{e}
    ~.
\]

It further relaxes the $1 - \frac{1}{e}$ to $\frac{2}{3}$.

\paragraph{Brubach-Sankararaman-Srinivasan-Xu Linear Program.}
The LP employed by \citet*{Brubach:Algorithmica:2020} in the special case of integral arrival rates is the closest to ours.
It is a relaxation of the natural LP, keeping only a subset of the second constraint, either at the limit with $S = I$ and $1$ on the right-hand-side, or when the subset $S$ consists of only one or two online types.
\begin{align*}
    \text{maximize} \quad & \sum_{(i, j) \in E} w_i x_{ij} \\
    \text{subject to} \quad & \sum_{j \in N_i} x_{ij} \le 1 && \forall i \in I \\
    & \sum_{i \in N_j} x_{ij} \le 1 && \forall j \in J \\
    &0\leq x_{ij} \leq 1-e^{-1} && \forall i \in I, j \in J\\[2ex]
    &x_{i_1j}+x_{i_2j} \leq 1-e^{-2} && \forall i_1\neq i_2 \in I, j \in J
\end{align*}

\section{Omitted Proofs}

\subsection{Poisson Tail Bound}
\label{app:proof-poisson-tail}
\begin{lemma}
	\begin{equation*}
		\sum_{m=\Lambda+1}^\infty \frac{\Lambda^m e^{-\Lambda}}{m!} \frac{m-\Lambda}{\Lambda} = O(\Lambda^{-\frac{1}{2}}).
	\end{equation*}
	\begin{proof}
        We shall simplify the left-hand-side as follows:
		\begin{align*}
			\sum_{m=\Lambda+1}^\infty \frac{\Lambda^m e^{-\Lambda}}{m!} \frac{m-\Lambda}{\Lambda}
			= & \sum_{m=\Lambda+1}^\infty \frac{\Lambda^{m-1} e^{-\Lambda}}{m!} (m-\Lambda)
			~\\
			= & \sum_{m=\Lambda+1}^\infty \frac{\Lambda^{m-1} e^{-\Lambda}}{(m-1)!} - \sum_{m=\Lambda+1}^\infty \frac{\Lambda^m e^{-\Lambda}}{m!}
			~\\
			= & ~ \frac{\Lambda^\Lambda e^{-\Lambda}}{\Lambda!}
            ~.
		\end{align*}

        It then follows by Stirling's formula.
	\end{proof}
\end{lemma}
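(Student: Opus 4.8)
The goal is the Poisson tail bound $\sum_{m=\Lambda+1}^\infty \frac{\Lambda^m e^{-\Lambda}}{m!} \cdot \frac{m-\Lambda}{\Lambda} = O(\Lambda^{-1/2})$. The plan is to compute this sum \emph{exactly} via a telescoping/index-shift trick, and then apply Stirling's formula to the resulting closed form. The key observation is that $\frac{\Lambda^m e^{-\Lambda}}{m!}(m-\Lambda)$ is essentially a difference of consecutive Poisson-like terms: writing $\frac{\Lambda^m e^{-\Lambda}}{m!}(m-\Lambda) = \frac{\Lambda^{m-1}e^{-\Lambda}}{(m-1)!} - \frac{\Lambda^m e^{-\Lambda}}{m!}$, the sum over $m \ge \Lambda+1$ telescopes: the first piece, after reindexing $m \mapsto m-1$, runs over $m \ge \Lambda$, while the second piece runs over $m \ge \Lambda+1$; all but the boundary term $\frac{\Lambda^\Lambda e^{-\Lambda}}{\Lambda!}$ cancels.

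So the first step is the algebraic manipulation: multiply out, split into two series, shift the index in one of them, and observe the cancellation, leaving exactly $\frac{\Lambda^\Lambda e^{-\Lambda}}{\Lambda!}$. (One should be a little careful that both tail series converge absolutely so that rearrangement and cancellation are legitimate, but since these are dominated by the convergent series $\sum_m \frac{\Lambda^m}{m!} = e^\Lambda$, this is immediate.) The second and final step is to bound $\frac{\Lambda^\Lambda e^{-\Lambda}}{\Lambda!}$: by Stirling's approximation $\Lambda! \sim \sqrt{2\pi\Lambda}\,\Lambda^\Lambda e^{-\Lambda}$, we get $\frac{\Lambda^\Lambda e^{-\Lambda}}{\Lambda!} \sim \frac{1}{\sqrt{2\pi\Lambda}} = O(\Lambda^{-1/2})$, which is exactly the claim.

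There is really no hard part here — the entire content is the telescoping identity, which is a standard manipulation, followed by an invocation of Stirling. If anything, the only thing requiring a moment's care is the bookkeeping on summation ranges when $\Lambda$ is an integer (the problem statement treats $\Lambda$ as an integer in the online stochastic matching setting, so $m = \Lambda$ and $m = \Lambda+1$ are genuine integer endpoints, and the index shift lines up cleanly). I would simply present the three-line chain of equalities as in the excerpt's displayed computation and then write "It then follows by Stirling's formula," since spelling out the asymptotics of $\Lambda!/(\Lambda^\Lambda e^{-\Lambda})$ further would be routine.
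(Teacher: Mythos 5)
Your proposal is correct and follows essentially the same route as the paper: the same telescoping identity reducing the sum to the boundary term $\frac{\Lambda^\Lambda e^{-\Lambda}}{\Lambda!}$, followed by Stirling's formula to obtain the $O(\Lambda^{-1/2})$ bound. The added remarks on absolute convergence and integer endpoints are harmless but not needed beyond what the paper already does.
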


\subsection{Proof of Lemma \ref{lem:amortize-monotone}}
\label{app:proof-amortize-monotone}

\begin{proof}
    To prove that $\frac{1 - e^{-x}}{x} + \frac{1}{e} \big( 1 - \frac{2}{e} \big) x$ is decreasing in $x \in [\frac{1}{2}, 1]$, consider its derivative:
    \[
        \frac{(x+1)e^{-x} - 1}{x^2} + \frac{1}{e} \big( 1 - \frac{2}{e} \big)
        = 
        \frac{(x+1)e^{-x} - 1 + \frac{1}{e} \big( 1 - \frac{2}{e} \big) x^2}{x^2}
        ~.
    \]

    It suffices to prove that the numerator is negative.
    Take the derivative of the numerate:
    \[
        x \Big(\frac{2}{e}(1-\frac{2}{e}) - e^{-x}\Big) 
        \leq
        x \Big(\frac{2}{e}(1-\frac{2}{e}) - \frac{1}{e} \Big) 
        \leq 0
        ~.
    \]

    Hence, the numerator less than its value at $x = 0$, which is $0$.
\end{proof}

\end{document}